\documentclass[twocolumn,10pt]{IEEEtran}
\usepackage{braket}

% % \usepackage{cool}
% \usepackage{minibox}
% \DeclareSymbolFont{matha}{OML}{txmi}{m}{it}% txfonts
% \DeclareMathSymbol{\varv}{\mathord}{matha}{118}
% \usepackage{eurosym}
% \usepackage{hhline}

% \usepackage{multicol,bbm}
% \usepackage[normalem]{ulem}
% \usepackage{url}
% \usepackage{hyperref}
% \hypersetup{colorlinks,urlcolor=blue}
% 
% \hypersetup{colorlinks=false,pdfborder=000}
% 
% \makeatletter
% 
% 
% 
% % \pagestyle{empty}

\input{def.tex}
\usepackage{dsfont}
\usepackage{minibox}
\DeclareSymbolFont{matha}{OML}{txmi}{m}{it}% txfonts
\DeclareMathSymbol{\varv}{\mathord}{matha}{118}
\usepackage{eurosym}
\usepackage{hhline,physics}

\usepackage{multicol}
\usepackage{algorithm}
%,algorithmicx
\usepackage{graphicx}
\usepackage{textcomp}
\usepackage{caption,pifont}
\usepackage[multiple]{footmisc}

\usepackage{algorithmic}

\newcommand{\trans}{^{\mathsf{T}}}
\newcommand{\herm}{^{\H}}
\makeatletter

\IEEEoverridecommandlockouts
\begin{document}
	\title{Cooperative RIS and STAR-RIS assisted mMIMO Communication: Analysis and Optimization} 
	\author{Anastasios Papazafeiropoulos, Ahmet M. Elbir, Pandelis Kourtessis, Ioannis Krikidis, Symeon Chatzinotas \thanks{Copyright (c) 2015 IEEE. Personal use of this material is permitted. However, permission to use this material for any other purposes must be obtained from the IEEE by sending a request to pubs-permissions@ieee.org.}
			\thanks{A. Papazafeiropoulos is with the Communications and Intelligent Systems Research Group, University of Hertfordshire, Hatfield AL10 9AB, U. K., and with SnT at the University of Luxembourg, Luxembourg. A. M. Elbir is with Interdisciplinary Centre for Security, Reliability, and Trust (SnT) at the University of Luxembourg, Luxembourg.  P. Kourtessis is with the Communications and Intelligent Systems Research Group, University of Hertfordshire, Hatfield AL10 9AB, U. K.  I. Krikidis is with the IRIDA Research Centre for Communication Technologies, Department of Electrical and Computer Engineering, University of Cyprus, Cyprus. S. Chatzinotas is with the SnT at the University of Luxembourg, Luxembourg. A. Papazafeiropoulos was supported  by the University of Hertfordshire's 5-year Vice Chancellor's Research Fellowship.  Also, this work was co-funded by the European Regional Development Fund and the Republic of Cyprus through the Research and Innovation Foundation under the project INFRASTRUCTURES/1216/0017 (IRIDA). It has also received funding from the European Research Council (ERC) under the European Union’s Horizon 2020 research and innovation programme (Grant agreement No. 819819).
			S. Chatzinotas   was supported by the National Research Fund, Luxembourg, under the project RISOTTI. E-mails: tapapazaf@gmail.com,ahmetmelbir@gmail.com, p.kourtessis@herts.ac.uk,krikidis.ioannis@ucy.ac.cy, symeon.chatzinotas@uni.lu.}}
	\maketitle\vspace{-1.7cm}
	\begin{abstract}
		Reconfigurable intelligent surface (RIS) has emerged as a cost-effective and promising solution to extend the wireless signal coverage and improve the performance via passive signal reflection. Different from existing works which do not account for the cooperation between RISs or do not provide full space coverage, we propose the marriage of cooperative double-RIS with simultaneously transmitting and reflecting RIS (STAR-RIS) technologies denoted as RIS/STAR-RIS under correlated Rayleigh fading conditions to assist the communication in a massive multiple-input multiple-output (mMIMO) setup. The proposed architecture is superior since it enjoys the benefits of the individual designs.  We introduce a channel estimation approach of the cascaded channels with reduced overhead. Also, we obtain the deterministic equivalent (DE) of the downlink achievable sum spectral efficiency (SE) in closed form based on large-scale statistics. Notably, relied on statistical channel state information (CSI), we optimise both surfaces by means of the projected gradient ascent method (PGAM), and obtain the gradients in closed form. The proposed optimization achieves to maximise the sum SE of such a complex system, and has low complexity and low overhead since it can be performed at every several coherence intervals. Numerical results show the benefit of the proposed architecture and  verify the analytical framework. In particular, we show that the RIS/STAR-RIS architecture outperforms the conventional double-RIS or its single-RIS counterparts.
	\end{abstract}
	\begin{keywords}
		Double-RIS, simultaneously transmitting and reflecting RIS, correlated Rayleigh fading,  spectral efficiency, 6G networks.
	\end{keywords}
	
	\section{Introduction}
	Based on the recent  advancements in metasurfaces, reconfigurable intelligent surface  (RIS) has emerged as a promising technology to actualise a  smart and reconfigurable radio environment through passive   beamforming (PB) \cite{Wu2020,DiRenzo2020}. RIS consists of a large number of nearly passive elements, each of which can be independently adjusted to modify in real time the amplitude/phase shift of the impinging signal \cite{Wu2019}. The intelligent adjustment of its elements, enables RIS to proactively configure the wireless propagation channel towards better signal transmission instead of adapting to the channel by standard transceiver techniques. Among its benefits, we meet its low hardware cost as well as its lightweight and conformal geometry that can promote a large-scale and flexible deployment of RIS \cite{Wu2020,DiRenzo2020}.
	
	These attractive characteristics of RIS have attracted the interest from both academia and industry in studying the performance under various wireless system setups such as massive multiple-input-multiple-output (mMIMO) communication \cite{Papazafeiropoulos2022b,Papazafeiropoulos2022a}, orthogonal frequency 	division multiplexing (OFDM) \cite{Zheng2019,Yang2020b}, relaying communication \cite{Zheng2021a,Yildirim2021}, non-orthogonal multiple access (NOMA) \cite{Hou2020}, double/multi-RIS network \cite{Zheng2021b,Zheng2021,Mei2022,Dong2021,Abdullah2022,Papazafeiropoulos2022e,Ding2022,Han2022}. In particular, the double-RIS implementation is an interesting research direction providing enhanced coverage and better performance due to the multiplicated beamforming gain \cite{Zheng2021b}. However, most existing works on RIS have focused on setups with one or more  independently distributed RISs that serve user equipments (UEs) by  simple reflection only without considering the cooperation among multiple RISs. In the case of double-RIS, the PB over multiple RISs should be designed cooperatively, which means exploitation of the multiplicated beamforming gain while avoiding the undesired  interference to improve the system performance \cite{Zheng2021b,Zheng2021}. In \cite{Zheng2021b}, it was shown that a PB gain of order $ \mathcal{O}(N^{4}) $ can be obtained, where $ N $ is the total number of elements of the two cooperative RISs. This result outperforms the conventional single-RIS  with a PB gain  of order $ \mathcal{O}(N^{2}) $ \cite{Wu2019}. Notably, the implementation of two cooperative RISs induces additional path loss, which can be compensated by a sufficiently large number of $ N $. Although \cite{Han2020} presented this promising result,  it relied on an ideal line-of-sight (LoS) inter-RIS channel and a simplified system with a single antenna base station (BS), a single UE, and no single-reflection links. In \cite{Zheng2021b}, the problem of addressing a general setup with arbitrary channels and with multiple BS antennas/UEs was considered but does not provide full-space coverage,
	
	In parallel, most existing RIS research contributions assume that RISs can only reflect impinging waves which henceforth will be referred to as conventional single RISs. This design assumes that both the transmitter and the receiver are located on the same side of the surface, which results in half-space coverage and restricts the flexibility of the RIS technology since UEs may be located on both sides of the surface. To cover this gap, a new concept referred to as simultaneously transmitting and reflecting RIS (STAR-RIS) was presented in \cite{Xu2021,Mu2021,Papazafeiropoulos2022d,Papazafeiropoulos2022c,Niu2022,Xu2022}.  The works \cite{Zhang2022,Zeng2022} are among the first, where the feasibility of STAR-RIS was experimentally verified. According to STAR-RIS, the incident signal on a STAR-RIS element is divided into parts that correspond to the reflected and transmitted signals. In the former case, the signal is reflected to the same space as the impinging signal, while, in the latter case, the signal is transmitted to the opposite space. This is achieved by the manipulation of magnetic and electric currents of the surface elements that enable the reconfiguration of the transmitted and reflected signals through generally independent coefficients denoted as transmission and reflective coefficients  \cite{Xu2021,Mu2021}.  Despite its advantages, the literature concerning the integration of STAR-RIS into wireless  communication systems is still limited \cite{Xu2021,Mu2021,Papazafeiropoulos2022d,Papazafeiropoulos2022c,Niu2022,Xu2022}. For example, in \cite{Mu2021}, practical operation protocols for STAR-RIS have been proposed, and the joint transmission and reflection beamforming design  for unicast and multicast communication has been investigated. Also, in \cite{Papazafeiropoulos2022d}, the impact of  correlated Rayleigh fading was studied in STAR-RIS-assisted full duplex systems. Note that intelligent omni-surface (IOS), proposed in \cite{Zhang2020b}, is a similar idea to STAR-RIS, but the phase shifts for transmission and reflection  are identical. 
	
	\textit{Contributions}: The main contribution of this paper are summarised as follows:
	\begin{itemize}
		\item 	 Motivated by the above observations, we propose a double-RIS architecture, where the first and second surfaces consist of a conventional RIS and a STAR-RIS based on the energy splitting (ES) protocol, respectively. Contary to \cite{Zheng2021b}, the marriage of the double-RIS  with the STAR-RIS combines their advantages.  The double-RIS design aims mainly to extend the coverage, and the STAR-RIS aims  to provide $ 360^{\circ} $ coverage near the receiver side, while increasing the sum-rate. Specifically, two distributed surfaces  are deployed near the mMIMO BS and the group of nearby UEs to enhance communication. UEs can be located on both sides of the second surface (STAR-RIS). Note that we account for both single and double links as well as correlated Rayleigh fading.
		\item The introduction of the STAR-RIS in the double-RIS architecture emerges certain difficulties in the statistical analysis, channel estimation, and PB optimization. In particular, contrary to other works on STAR-RIS, we have  achieved  a unified analysis  regarding the channel estimation and data transmission phase that applies to a UE found in any of the $ t $ or $ r $ regions. Specifically, we consider an uplink training phase to obtain the imperfect CSI in closed-form for all links. Based on the deterministic equivalent (DE)
		analysis, which is often used in mMIMO systems, we derive in closed-form the downlink achievable signal-to-interference-plus-noise ratio (SINR) and the sum spectral efficiency (SE) being dependent on large-scale statistics. Also, contrary to many works that have relied on independent fading such as  \cite{Wu2019,Pan2020}, we have assumed  correlated fading, which is unavoidable in practice and affects the performance  \cite{Bjoernson2020}.	
		\item Given this system  and based on statistical CSI, we optimize the PBs at the two surfaces to maximise the same rate. To the best of our knowledge, we are the first to optimize simultaneously the amplitudes and the phase shifts of the PB in a STAR-RIS system.  This is a significant contribution since other works optimize only the phase shifts or optimize both the amplitudes and the phase shifts in an  alternating optimization manner. Notably, this property is  important for STAR-RIS applications, which have twice the number of optimization variables compared to reflecting-only RIS. In particular, since we can derive the gradients in closed form simple expressions, we apply the projected  gradient ascent method (PGAM)  alternatively for each surface, and obtain the corresponding optimal PB in closed form while achieving low complexity and low overhead. The main reason is that the optimization can take take place at every several coherence intervals when the statistical CSI changes.
		\item 	 We provide Monte-Carlo (MC) simulations to corroborate the theoretical results on the RIS/STAR-RIS   performance and investigate the effectiveness of the proposed optimization of PB. It is shown that the RIS/STAR-RIS can achieve significant performance gains   over the conventional double-RIS and single-RIS or STAR-RIS architectures.
	\end{itemize}

	\textit{Paper Outline}:	The rest of this paper is organised as follows. Section \ref{System} presents the system model of the proposed RIS/STAR-RIS architecture. In Section \ref{ChannelEstimation}, we provide the channel estimation approach. In Section \ref{DownlinkAchievable}, we derive the DE of the downlink sum SE.   In Section \ref{ProblemFormulation1}, we design the RIS PB   for each surface and study its performance. Simulations results are presented in Section \ref{numerical11} to assess the performance of the proposed design and validate the analytical. Finally, Section \ref{Conclusion} concludes the paper.
	
	\textit{Notation}: Vectors and matrices are denoted by boldface lower and upper case symbols, respectively. The notations $(\cdot)^\T$, $(\cdot)^\H$, and $\tr\!\left( {\cdot} \right)$ describe the transpose, Hermitian transpose, and trace operators, respectively. Moreover, the notations  $\EE\left[\cdot\right]$  and $ \mathrm{Var}(\cdot) $ express  the expectation and variance operators, respectively. The notation  $\diag\left(\bA\right) $ describes a vector with elements equal to the  diagonal elements of $ \bA $, the notation  $\diag\left(\bx\right) $ describes a diagonal  matrix whose elements are $ \bx $, while  $\bb \sim \cC\cN{(\b0,\mathbf{\Sigma})}$ describes a circularly symmetric complex Gaussian vector with zero mean and a  covariance matrix $\mathbf{\Sigma}$. 	Also, the notation  $a_n\asymp b_n$ with $a_n$ and $b_n$ being two infinite sequences denotes almost sure convergence as $ M \rightarrow \infty $, and the notation $  \pdv{f(x)}{x}   $ denotes the partial derivative of $ f $ with respect to  $ x $.

	\section{System Model}\label{System}
	We consider  a mixed double-RIS cooperatively assisted multi-user MIMO communication system, where two distributed RISs assist the communication from an $ M $-antenna mMIMO BS to $ K $ single-antenna UEs, as shown in Fig.~\ref{Fig1}.  The surface, positioned near the BS is denoted as RIS $ 1 $, and plays the role of an extender. The second surface, being a  STAR-RIS (denoted as RIS $ 2 $), can be located close to the UEs, which can be distributed on both sides of the surface such as indoor and outdoor UEs by providing $ 360^{\circ} $ coverage.\footnote{If the first reflection-only RIS is substituted by a STAR-RIS, the second STAR-RIS could be placed at the reflection or the transmission region, which leads to differences between these two scenarios. This research direction will be the topic of future research.}  In particular, $ \mathcal{K}_{r}\in \{1,\ldots,K_{r} \} $ UEs are  located in the reflection region $ (r) $ and $ \mathcal{K}_{t}\in \{1,\ldots,K_{t} \} $ UEs are  located in the transmission  region $ (t) $ of the STAR-RIS,  where $ K_{t}+K_{r}=K $. Moreover, we denote by $ \mathcal{W}_{k}=\{w_{1}, w_{2}, ..., w_{K}\} $  the  operation mode for the STAR-RIS  for each of the $ K $ UEs. Specifically, if the $ k $th UE is in the reflection region, i.e., $ k\in   \mathcal{K}_{r}$  then $ w_{k} = r $, otherwise, it will be  $ w_{k} = t $.  We assume that surfaces $ 1 $ and $ 2 $ consist of   uniform
	planar arrays (UPAs) of $ N_{1} $ and $ N_{2} $ passive elements, respectively, which means $ N_{1}+N_{2}=N $. The corresponding sets are denoted as $ \mathcal{N}_{1} $ and $ \mathcal{N}_{2} $.  Also, each distributed RIS is connected to a smart controller
	that adjusts its amplitudes/phase shifts and exchanges information with the BS via a separate reliable wireless link.  The RIS elements of both surfaces can be perfectly controlled. Despite the presence of blockages, we  assume the general scenario, where  direct links exist.

	\begin{figure}[!h]
		\begin{center}
			\includegraphics[width=0.9\linewidth]{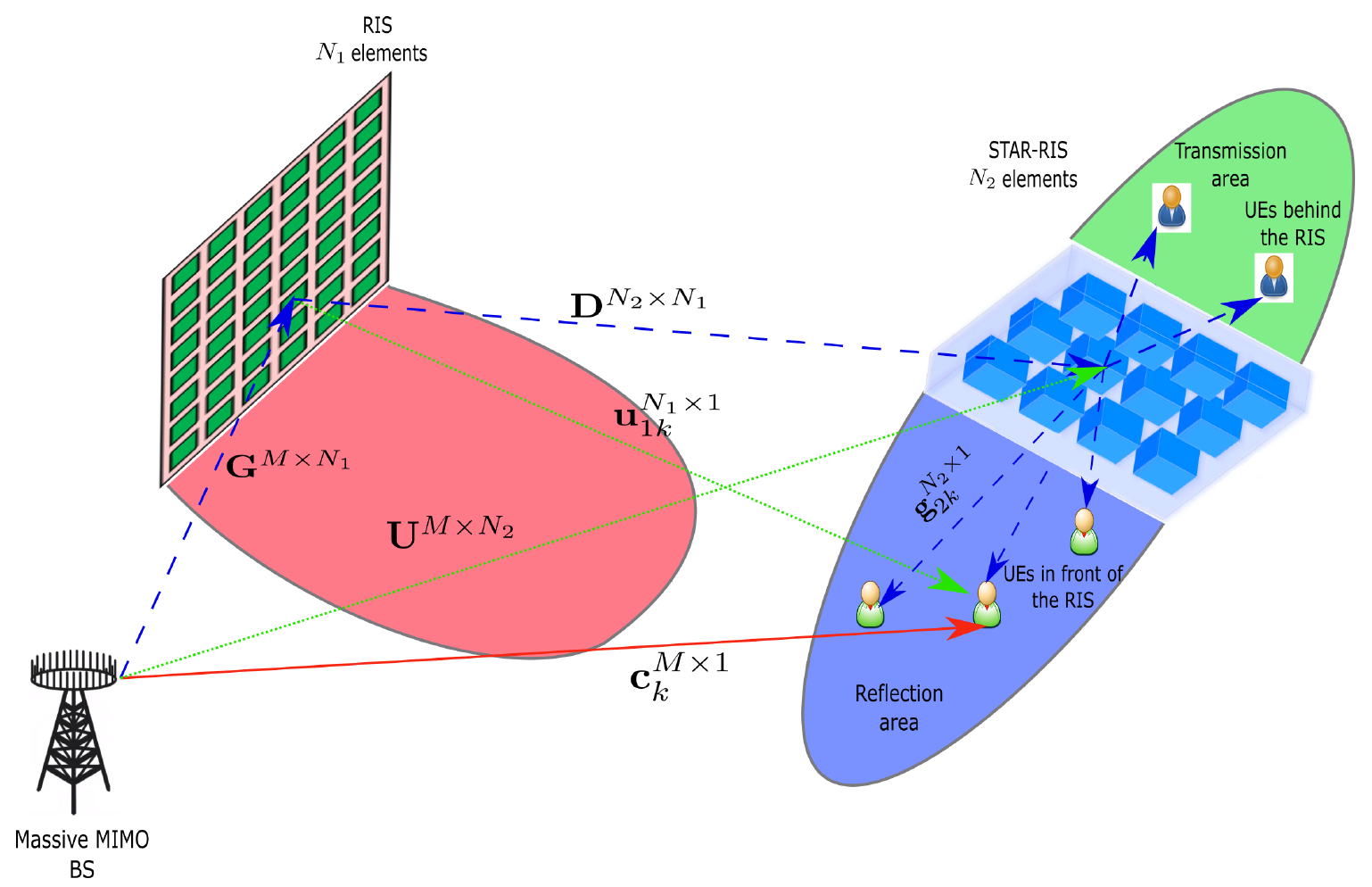}
			\caption{{ An mMIMO RIS/STAR-RIS assisted system with multiple UEs at transmission and reflection regions.  }}
			\label{Fig1}
		\end{center}
	\end{figure}

	We assume that the  STAR-RIS can configure  the transmitted ($ t $) and reflected ($ r $) signals by two independent coefficients. Specifically, let $ t_{n} =( {\beta_{n}^{t}}e^{j \phi_{n}^{t}})s_{n}$ and $ r_{n}=( {\beta_{n}^{r}}e^{j \phi_{n}^{r}})s_{n} $ denote the transmitted 	and reflected signal by the $ n $th  element of the STAR-RIS, respectively.  Regarding the amplitude and phase parameters, we have  $ {\beta_{n}^{w_{k}}}\in [0,1] $ and $ \phi_{i}^{w_{k}} \in [0,2\pi)$. According to this model  $ \phi_{n}^{t} $ and $ \phi_{n}^{r} $ can be chosen independently. However,  the choice of amplitudes is based on the relationship provided by the law of energy conservation as
	\begin{align}
		(\beta_{n}^{t})^{2}+(\beta_{n}^{r})^{2}=1,  \forall~ n =1,\ldots, {N}_{2}.
	\end{align}
	
	Furthermore, in the case of the conventional RIS, i.e., RIS $ 1 $, $\bPhi_{1}=\mathrm{diag}\left( \al_{1}e^{j \varphi_{1}}, \ldots, \al_{N}e^{j \varphi_{N}} \right)\in\mathbb{C}^{N_{1}\times N_{1}}$ is the diagonal PB that expresses the response of  $ N_{1} $ elements with $ \varphi_{n} \in [0,2\pi]$ and $ \al_{n}\in [0,1] $ describing the phase  and amplitude coefficient for element $ n=0,\ldots, N_{1} $, respectively. Without loss of generality, we make the common assumption of maximum  reflection ($ \al_{n}=1~ \forall n$) based on recent advances in  lossless metasurfaces \cite{Badloe2017}, because we want to focus on the operation and novel optimization of the STAR-RIS. The generalization to not optimal reflection for RIS $ 1 $ is straightforward and can follow similar lines to the STAR-RIS optimization.
	
	\subsection{Operation  for STAR-RIS}
	Our study concerns the ES protocol \cite{Mu2021}, which is summarized below. The study of the mode switching (MS) protocol is left for future work.
	
	\textit{ES protocol:} All  elements of STAR-RIS serve simultaneously  $ K $ UEs. Especially, the  PB is  expressed as $ \bPhi_{2,w_{k}}^{\mathrm{ES}}=\diag( {\beta_{1}^{w_{k}}}e^{j\phi_{1}^{w_{k}}}, \ldots,  {\beta_{N_{2}}^{w_{k}}}e^{j\phi_{N_{2}}^{w_{k}}}) \in \mathbb{C}^{N_{2}\times N_{2}}$, where $ \beta_{n}^{w_{k}} \ge 0 $, $ 		(\beta_{n}^{t})^{2}+(\beta_{n}^{r1})^{2}=1 $, and $ |e^{j\phi_{n}^{w_{k}}}|=1, \forall~ n =1,\ldots, {N}_{2} $.
	
	%\subsubsection{MS protocol} The  elements of the STAR-RIS  are partitioned  into two groups of $ N_{t} $ and $ N_{r} $ elements that serve  its  $  t$ and $ r $ regions, respectively. In other words,  $ N_{t}+N_{r}=N_{2} $. The PB of the STAR-RIS  is given by $ \bPhi_{2,w_{k}}^{\mathrm{MS}}=\diag( {\beta_{1}^{w_{k}}} \theta_{1}^{w_{k}}, \ldots,  {\beta_{N_{2}}^{w_{k}}}\theta_{N_{2}}^{w_{k}}) \in \mathbb{C}^{N_{2}\times N_{2}}$, where $ \beta_{n}^{w_{k}}\in \{0,1\}$, $ 	(\beta_{n}^{t})^{2}+(\beta_{n}^{r})^{2}=1 $, and $ |\theta_{n}^{w_{k}}|=1,  \forall~ n =1,\ldots, {N}_{2}$. Obviously the MS protocol, where the amplitude coefficients for transmission and reflection are restricted to binary values, cannot achieve the full-dimension transmission and reflection beamforming gain of ES protocol since it is  a special case of the EE. However, it comes with the benefit of  lower computational complexity regarding the  PB design.
	
	Henceforth, for the sake of exposition, we set $ \bar{\theta}_{n}=e^{j\varphi_{n}},  n\in \mathcal{ N}_{1} $ and $ \theta_{n}^{w_{k}}=e^{j\phi_{n}^{w_{k}}},  n\in \mathcal{ N}_{2} $. Also, we denote $\bar{\thetv}=\diag(\bPhi_{1})=[\bar{\theta}_{1}, \ldots, \bar{\theta}_{N_{1}}]^{\T}\in \mathbb{C}^{N_{1}\times 1} $, $\thetv^{u}=[\theta_{1}^{u}, \ldots, \theta^{u}_{N_{2}}]^{\T}\in \mathbb{C}^{N_{2}\times 1} $, and $\betv^{u}=[\beta^{u}_{1}, \ldots, \beta^{u}_{N_{2}}]^{\T}\in \mathbb{C}^{N_{2}\times 1} $, where $ u=\{t,r\} $.

	\subsection{Channel Model}\label{ChannelModel} 
	Based on a narrowband quasi-static block-fading model with independent channel realizations across different coherence blocks, we denote $ 	 \bG_{1}=[\bg_{11}\ldots,\bg_{1N_{1}} ] \in \mathbb{C}^{M \times N_{1}}$,   $ 	\bD\in \mathbb{C}^{N_{2} \times N_{1}}$, $ 	\bg_{2k}\in \mathbb{C}^{N_{2} \times 1}$, $ 	\bu_{1k}\in \mathbb{C}^{N_{1} \times 1}$,  $  \bU_{2}=[\bu_{21}\ldots,\bu_{2N_{2}} ] \in \mathbb{C}^{M \times N_{2}}$, and $ \bc_{k}\in \mathbb{C}^{M \times 1} $   as the channels  from the BS to  RIS, from  RIS to STAR-RIS, from  RIS  to UE $ k $,  from the BS  to the STAR-RIS, and the direct link  between the BS and UE  $ k $, respectively. By taking into account  correlated Rayleigh fading and path-loss,\footnote{The analysis, corresponding to correlated Rician fading, where an LoS component exists additionally to a multipath part,  is the topic of future work.} we have
	\begin{align}
		\mathrm{vec}(\bG_{1})&\sim \mathcal{CN}\left(\b0, \bR_{t1}\right),\label{eq1}\\
		\mathrm{vec}(\bD)&\sim \mathcal{CN}\left(\b0,\bR_{12} \right),\\
		\bg_{2k}&\sim \mathcal{CN}\left(\b0, \bR_{2k}\right),\\
		\bu_{1k}&\sim \mathcal{CN}\left(\b0, \bR_{1k}\right),\\
		\mathrm{vec}(\bU_{2})&\sim \mathcal{CN}\left(\b0, \bR_{t2}\right),\\
		\bc_{k}&\sim \mathcal{CN}\left(\b0, \bar{ \beta}_{k}\bR_{t}\right),\label{eq5}
	\end{align}
	where 
	$\bR_{t1}=\beta_{t1} \bR_{t}\otimes \bR_{1}\in \mathbb{C}^{MN_{1} \times M N_{1}} $,  $ \bR_{12}=\beta_{12}\bR_{1}\otimes \bR_{2}\in \mathbb{C}^{N_{1}N_{2} \times N_{1}N_{2}}  $, $\bR_{2k}=\beta_{2k} \bR_{2}\in \mathbb{C}^{N_{2} \times N_{2}} $, $\bR_{1k}=\beta_{1k} \bR_{1}\in \mathbb{C}^{N_{1} \times N_{1}} $, and $\bR_{t2}=\beta_{t2}  \bR_{t}\otimes \bR_{2}\in \mathbb{C}^{MN_{2} \times M N_{2}}  $ are the spatial covariance matrices of the respective links  with  $  \beta_{t1} $, $\beta_{12} $, $ \beta_{2k} $, $ \beta_{1k}$,  $  \beta_{t2}  $, and $ \bar{ \beta}_{k} $ being  the corresponding path-losses. Also, $ \bR_{t} \in \mathbb{C}^{M \times M}$ is the correlation matrix at the BS, while $ \bR_{1} \in \mathbb{C}^{N_{1} \times N_{1}}$ and $ \bR_{2} \in \mathbb{C}^{N_{2} \times N_{2}}$ are the correlation matrices of RIS  and  STAR-RIS, respectively.  Regarding $  \bR_{t} $, it can be modeled e.g., as in \cite{Hoydis2013}, and $ \bR_{1} $, $ \bR_{2} $ are modeled as in \cite{Bjoernson2020}, which describes isotropic Rayleigh fading. Specifically, let  $d_{q,\mathrm{V}}$ and $d_{q,\mathrm{H}}$ denote the vertical height and  horizontal width of each  element of RIS $ q=1,2 $, where RIS $ 1 $ and RIS $ 2 $ correspond to the conventional RIS and STAR-RIS, respectively. Then,  the $ \left(i,j\right) $th element of the representative correlation matrix $ \bR_{q} \in \mathbb{C}^{N_{q} \times N_{q}}  $ in \eqref{eq1}-\eqref{eq5} in the case of a RIS with $ N_{q}=N_{q,\mathrm{H}}N_{q,\mathrm{V}}  $ elements is given by
	\begin{align} \label{eq:Element}
		r_{q, ij} = d_{q,\mathrm{H}} d_{q,\mathrm{V}} \mathrm{sinc} \left( 2 \|\mathbf{u}_{i} - \mathbf{u}_{j} \|/\lambda\right),
	\end{align}
	where $\mathbf{u}_{\epsilon} = [0, \mod(\epsilon-1, N_{q,\mathrm{H}})d_{q,\mathrm{H}}, \lfloor (\epsilon-1)/N_{q,\mathrm{V}} \rfloor d_{q,\mathrm{V}}]^\T$, $\epsilon \in \{i,j\}$, and $\lambda$ is the wavelength of the plane wave, while    $ N_{q,\mathrm{H}} $ and $ N_{q,\mathrm{V}} $ denote the horizontally and  vertically  passive elements of RIS $ q $, i.e., $ N_{q}=N_{q,\mathrm{H}}\times N_{q,\mathrm{V}}$. Note that the path-losses and the covariance matrices are  assumed known  since they can be obtained with practical methods, {e.g., see} \cite{Neumann2018}.

	With  fixed PBs,  the aggregated channel vector for UE $ k $ via both RISs is 
	\begin{align}
		 \bh_{k}= \bd_{k}+ \bG_{1}\bPhi_{1}\bD\bPhi_{2,w_{k}} \bg_{2k},
	\end{align} 
and has a variance  $ \bR_{0k}=\EE\{\bh_{k}\bh_{k}^{\H}\} $. We have
	\begin{align}
		&\bR_{0k}=\bar{ \beta}_{k}\bR_{t}+\beta_{2k}\EE\{\bG_{1}\bPhi_{1}\bD\bPhi_{2,w_{k}}\bR_{2k}\bPhi_{2,w_{k}}^{\H}\bD^{\H}\bPhi_{1}^{\H}\bG_{1}^{\H}\}\label{cov1}\\
		&=\bar{ \beta}_{k}\bR_{t}+	\beta_{2k}\beta_{12}\tr(\bR_{1}\bPhi_{2,w_{k}}\bR_{2}\bPhi_{2,w_{k}}^{\H})\EE\{\bG_{1}\bPhi_{1}\bR_{2}\bPhi_{1}^{\H}\bG_{1}^{\H}\}\label{cov2}\\
		&=\bar{ \beta}_{k}\bR_{t}+	\hat{\beta}_{k}\tr(\bR_{1}\bPhi_{2,w_{k}}\bR_{2}\bPhi_{2,w_{k}}^{\H})\tr(\bR_{1}\bPhi_{1}\bR_{2}\bPhi_{1}^{\H})\bR_{t},\label{cov3}
	\end{align}
	where, in \eqref{cov1} we have relied on  the independence between $ \bc_{k} $, $ \bG_{1} $, $ \bD $, and $ \bg_{2k} $, and have applied $ \EE\{	\bg_{2k}	\bg_{2k}^{\H}\} = \beta_{2k} \bR_{\mathrm{RIS}}$. In \eqref{cov2}, we have expressed $ \bD $ in terms of its equivalent notation $ \bD=\sqrt{ \beta_{12}}\bR_{2}^{1/2}\tilde{\bD}\bR_{1}^{1/2} $ with $ \mathrm{vec}(\tilde{\bD})\sim \mathcal{CN}\left(\b0,\Id_{N_{1}N_{2}}\right) $, and  we have applied  the property $ \EE\{\bV \bU\bV^{\H}\} =\tr (\bU) \Id_{M}$ with $\bU  $ being a deterministic square matrix, and $ \bV $ being any matrix with independent and identically distributed (i.i.d.) entries of zero mean and unit variance. In \eqref{cov3}, we  have set $ 	\hat{\beta}_{k}=\beta_{t1}\beta_{2k}\beta_{12} $,  we have expressed $ \bG_{1} $ in terms of its equivalent notation $ \bG_{1}=\sqrt{ \beta_{t1}}\bR_{1}^{1/2}\tilde{\bG}\bR_{t}^{1/2} $ with $ \mathrm{vec}(\tilde{\bG})\sim \mathcal{CN}\left(\b0,\Id_{MN_{1}}\right) $, and we have applied again the previous property. It is worthwhile to mention that , when $\bR_{1} =\Id_{N_{1}} $, and  $\bR_{2}=\Id_{N_{2}} $,  $ \bR_{0k} $ does not depend on the phase shifts but only on the amplitudes, as also observed in \cite{Papazafeiropoulos2022}. In other words, $ \bR_{0k} $ can be optimized only with respect to the amplitudes.
	
	Given a fixed PB each time for the single links BS-STAR-RIS-UE $ k $ and BS-RIS-UE $ k $, the cascaded channels are $ \bh_{2k}=\bU_{2}\bPhi_{2,w_{k}}\bg_{2k} $ and $ \bh_{1k}=\bG_{1}\bPhi_{1}\bu_{1k} $ with variances  $ \bR_{2k}=\EE\{\bh_{2k}\bh_{2k}^{\H}\} $ and  $ \bR_{1k}=\EE\{\bh_{1k}\bh_{1k}^{\H}\} $, respectively. In particular, these can be written as
	\begin{align}
		\bR_{2k}&=\hat{\beta}_{2k}\tr(\bR_{t} \bPhi_{2,w_{k}}\bR_{2}\bPhi_{2,w_{k}}^{\H})\bR_{2},\label{cor2}\\
		\bR_{1k}&=\hat{\beta}_{1k}\tr(\bR_{t} \bPhi_{1}\bR_{1}\bPhi_{1}^{\H})\bR_{1},\label{cor1}
	\end{align}
	where $ \hat{\beta}_{2k}=\beta_{2k}\beta_{t2} $ and $ \hat{\beta}_{1k}=\beta_{1k}\beta_{t1} $. Note that we have used similar steps to \eqref{cov3} to obtain \eqref{cor2} and \eqref{cor1}, and have considered   $ \bU_{2} $ in terms of its equivalent notation $ \bU_{2}=\sqrt{ \beta_{t2}}\bR_{2}^{1/2}\tilde{\bU}\bR_{t}^{1/2} $ with $ \mathrm{vec}(\tilde{\bU})\sim \mathcal{CN}\left(\b0,\Id_{MN_{2}}\right) $.
	\begin{remark}\label{rem1}
		Under independent Rayleigh fading conditions, i.e., $ \bR_{1}=\Id_{N_{1}} $, $ \bR_{2}=\Id_{N_{2}} $, and $ \bR_{t}=\Id_{M} $, the variances of cascaded channels become $ \bR_{0k}=
		\hat{\beta}_{k}N_{1}\sum_{i=1}^{N_{2}}(\beta_{i}^{w_{k}})^{2}\Id_{M} $, $ \bR_{1k}=
		\hat{\beta}_{2k}\sum_{i=1}^{N_{2}}(\beta_{i}^{w_{k}})^{2}\Id_{N_{2}} $, and $ \bR_{2k}=
		\hat{\beta}_{2k}N_{1}\Id_{N_{1}} $, which are independent of the phase shifts. Hence, the correlation matrices appear dependence only on the amplitudes of the STAR-RIS. In this case, the optimization provided below, which is based on statistical CSI, takes place only with respect to the amplitudes of the second surface, while no phase shifts optimization can be performed. However, given that correlated fading is unavoidable in practice, the optimization of the  surfaces  depends on  the phase shifts in real-world scenarios.
	\end{remark}
	
	\section{Channel Estimation}\label{ChannelEstimation}
	In practice, CSI is imperfect. Herein, we rely on the  time division duplex (TDD) protocol to estimate the channels  by an uplink training phase with pilot symbols \cite{Bjoernson2017}. However, both RISs, consisting of nearly passive elements without any RF chains, cannot  obtain the received pilots by UEs and  process the estimated channels. Generally, there are two  approaches for channel estimation that correspond to  the estimation of the individual channels[3], [6], [11], and to the estimation of the aggregated channels by using long-term performance metrics  [4], [5], [36].  The advantages of the latter approach are  the implementation  without any  extra hardware and with low power cost.		 Hence, we follow the standard approach in mMIMO systems to estimate the cascaded channels, which requires less hardware and lower complexity compared to estimating the individual channels. Actually, one of the advantages of this work is to make the expression for the channel estimation "looking" identical for both types of users belonging to different areas of the second RIS but note that at the end it is different since $  {\bR}_{k} $ including the expressions corresponding to the   phases shifts is different, for users in  $ t $ and  $ r $ regions. In other words, we have achieved  to introduce the standard channel estimation for  multiple-user SIMO to STAR-RIS, which has not taken place before. 	However, there is a cost compared to the instantaneous performance. If we focused on  short-term metrics such as instantaneous SE, then the effective cascaded channels should be estimated by tuning the RIS passive beamforming (rather than fixed).\footnote{ In the case we would like to acquire the  individual channels, we could assume receive RF chains integrated into the RISs similar to \cite{Zheng2021,Zheng2021a}.} 
	
	Although the transmitted signal from  UE $ k $ can propagate along both channels simultaneously,  the double-RIS channel and single-RIS channel can be estimated separately. Specifically, first, the single-RIS channels are estimated by switching off the other RIS. Next, we can remove the single-RIS links by carefully 	selecting the pilots, as was performed  in \cite[Eq. 8]{Jiang2022}. Note that by picking good pilots, we can  also remove the double-RIS-aided link from the total channel without turning 		off the individual RIS as well.
	
	The following analysis achieves to obtain the estimated channels  for  fixed PBs in closed-forms in terms of large-scale statistics, and thus, channel estimation can be performed at every several coherence intervals.  Other methods such as \cite{He2019} do not provide analytical expressions while not capturing the correlation effect since they obtain the estimated channel per RIS element \cite{Nadeem2020}.

	In this direction, we assume that  each block has a duration of $\tau_{\mathrm{c}}$ channel uses, where $\tau$ channel uses are allocated for the uplink training phase and $\tau_{\mathrm{c}}-\tau$ channel uses are allocated for the downlink data transmission phase.

	For the double-RIS-assisted channels, we assume that all UEs either in  $ t $ or $ r $ region of the STAR-RIS send orthogonal pilot sequences. Given that the duration of the uplink training phase is $ \tau $ channel uses, we denote by $\bx_{k}=[x_{k,1}, \ldots, x_{k,\tau}]^{\T}\in \mathbb{C}^{\tau\times 1} $ the pilot sequence of UE $ k  $ that can be located in any of the two regions. Thus, we assume that all UEs from  both regions send pilots to the BS, which receives
	\begin{align}
		\bY^{\tr}=\sum_{i=1}^{K}\bh_{i}\bx_{i}^{\H} +
		\bZ^{\tr},\label{train1}
	\end{align}
	where  $ \bZ^{\tr} \in \mathbb{C}^{M \times \tau} $ is the received AWGN matrix having independent columns with each one distributed as $ \mathcal{CN}\left(\b0,\sigma^2\Id_{M}\right)$. Next, we multiply \eqref{train1} with the transmit training sequence from UE $ k $  to remove the  interference  from other UEs, which can be located in the same  or the opposite region, and we obtain
	\begin{align}
		\br_{k}=\bh_{k}+\frac{\bz_{k}}{ \tau P},\label{train2}
	\end{align}
	where  $ \bz_{k}=\bZ^{\tr} \bx_{k}$.  
	
	\begin{lemma}\label{PropositionDirectChannel}
		The linear  minimum mean square error (LMMSE) estimate of the  double-RIS-assisted channel $ \bh_{k} $ between  UE $ k $  and the  BS  is written as
		\begin{align}
			\hat{\bh}_{k}=\bR_{0k}\bQ_{0k} \br_{k},\label{estim1}
		\end{align}
		where $ \bQ_{0k}\!=\! \left(\!\bR_{0k}\!+\!\frac{\sigma^2}{ \tau P }\Id_{M}\!\right)^{\!-1}$, and $ \br_{k}$ is the noisy channel given by \eqref{train2}.
	\end{lemma}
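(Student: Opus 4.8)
The plan is to treat \eqref{train2} as a standard linear observation model and apply the textbook LMMSE estimator, noting that this estimator depends only on the first two moments of $\bh_k$ and of the effective noise $\bz_k/(\tau P)$. In particular, the fact that $\bh_k$ is \emph{not} Gaussian (it is the sum of the Gaussian direct link $\bd_k$ and the triple product $\bG_1\bPhi_1\bD\bPhi_{2,w_k}\bg_{2k}$) is immaterial here, since we only claim optimality within the class of linear estimators, for which second-order statistics suffice.

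First I would record the required statistics. Because $\bc_k$, $\bG_1$, $\bD$, $\bg_{2k}$ are mutually independent and zero mean, $\bh_k$ is zero mean, and its covariance is exactly $\bR_{0k}=\EE\{\bh_k\bh_k^\H\}$, already evaluated in closed form in \eqref{cov3}. For the post-correlation noise, $\bz_k=\bZ^\tr\bx_k$ with the columns of $\bZ^\tr$ i.i.d.\ $\mathcal{CN}(\b0,\sigma^2\Id_M)$ and $\|\bx_k\|^2=\tau P$ (the pilots are orthogonal with per-symbol power $P$), so $\bz_k\sim\mathcal{CN}(\b0,\sigma^2\tau P\,\Id_M)$ and hence $\bz_k/(\tau P)$ is zero mean with covariance $\tfrac{\sigma^2}{\tau P}\Id_M$ and independent of $\bh_k$. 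Next I would invoke the orthogonality principle: the LMMSE estimate has the form $\hat{\bh}_k=\bA\br_k$, where $\bA$ is chosen so that the error $\bh_k-\bA\br_k$ is uncorrelated with $\br_k$, which yields $\bA=\EE\{\bh_k\br_k^\H\}\big(\EE\{\br_k\br_k^\H\}\big)^{-1}$. Finally, plugging in \eqref{train2} and using independence and zero means gives $\EE\{\bh_k\br_k^\H\}=\bR_{0k}$ and $\EE\{\br_k\br_k^\H\}=\bR_{0k}+\tfrac{\sigma^2}{\tau P}\Id_M$, the latter being positive definite and therefore invertible; denoting its inverse by $\bQ_{0k}$ yields $\hat{\bh}_k=\bR_{0k}\bQ_{0k}\br_k$, which is \eqref{estim1}.

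There is no real obstacle — the computation is routine — but the two points that deserve a careful sentence are: (i) justifying that it suffices to work with the LMMSE (rather than the MMSE) estimator, so that only the covariances $\bR_{0k}$ and $\tfrac{\sigma^2}{\tau P}\Id_M$ enter despite the non-Gaussian cascaded channel; and (ii) getting the scaling of the noise right, i.e.\ tracking the factor $\|\bx_k\|^2=\tau P$ through the division by $\tau P$ so that the noise covariance comes out as $\tfrac{\sigma^2}{\tau P}\Id_M$ and matches the stated form of $\bQ_{0k}$. Alternatively, one can simply cite the Bayesian LMMSE formula from a standard reference such as \cite{Bjoernson2017} applied to the model \eqref{train2}, after which only the two inner-product evaluations remain.
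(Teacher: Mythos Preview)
Your proposal is correct and follows essentially the same route as the paper: invoke the LMMSE/orthogonality-principle formula $\hat{\bh}_k=\EE\{\bh_k\br_k^\H\}(\EE\{\br_k\br_k^\H\})^{-1}\br_k$, then compute $\EE\{\bh_k\br_k^\H\}=\bR_{0k}$ and $\EE\{\br_k\br_k^\H\}=\bR_{0k}+\tfrac{\sigma^2}{\tau P}\Id_M$ using independence of the channel and noise. Your additional remarks on the non-Gaussianity of $\bh_k$ and on tracking the noise scaling $\|\bx_k\|^2=\tau P$ are helpful clarifications the paper leaves implicit, but the underlying argument is the same.
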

	\begin{proof}
		Please see Appendix~\ref{lem1}.	
	\end{proof}
	
	According to the property of the orthogonality of LMMSE estimation, the overall perfect channel can be written  in terms of the estimated channel $\hat{\bh}_{k}$ and estimation channel error vectors $\tilde{\bh}_{k}$ as
	\begin{align}
		\bh_{k}=\hat{\bh}_{k}+\tilde{\bh}_{k}\label{current}, \end{align}
	where  $\hat{\bh}_{k}$ and $\tilde{\bh}_{k} $ are uncorrelated, have zero mean, and  variances  (cf. \eqref{var1}) 	$ \bPsi_{k}\!=\!\bR_{0k}\bQ_{0k}\bR_{0k} $ and $ 
	\bE_{k}=\bR_{0k}-\bPsi_{k} $, respectively.  
	
	Regarding the estimation of the single-RIS-assisted	channels, we provide the following lemma.
	\begin{lemma}\label{PropositionDirectChannel1}
		The LMMSE estimate of the  single-RIS-assisted channels $\bh_{ik} $ for $ i=1,2 $ between  UE $ k $  obeys to 
		\begin{align}
			\bh_{ik}=\hat{\bh}_{ik}+\tilde{\bh}_{ik}\label{current1}, ~i=1,2
		\end{align}
		where  $\hat{\bh}_{ik}$ and $\tilde{\bh}_{ik} $ are uncorrelated, have zero mean, and  variances  	$ \bPsi_{ik}\!=\!\bR_{ik}\bQ_{ik}\bR_{ik} $ and $ 
		\bE_{ik}=\bR_{ik}-\bPsi_{ik} $, respectively.
	\end{lemma}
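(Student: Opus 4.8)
The plan is to mirror, step for step, the LMMSE derivation carried out for Lemma~\ref{PropositionDirectChannel} in Appendix~\ref{lem1}, applied separately to each of the two single-RIS cascaded channels $ \bh_{1k}=\bG_{1}\bPhi_{1}\bu_{1k} $ and $ \bh_{2k}=\bU_{2}\bPhi_{2,w_{k}}\bg_{2k} $ once the double-reflection term has been removed from the uplink training signal. Recall that, as discussed above, the single-RIS links are isolated either by switching off the companion surface or by the pilot choice of \cite[Eq. 8]{Jiang2022}; the BS then receives $ \bY_{i}^{\tr}=\sum_{j=1}^{K}\bh_{ij}\bx_{j}^{\H}+\bZ_{i}^{\tr} $ with $ \bZ_{i}^{\tr} $ having i.i.d.\ $ \mathcal{CN}(\b0,\sigma^{2}\Id) $ columns. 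First I would project this block onto (and normalise by) UE~$ k $'s orthogonal pilot $ \bx_{k} $; since the pilots are orthogonal, the interference from the other UEs is removed regardless of whether those UEs lie in the same region or the opposite region of the STAR-RIS, and one is left with a noisy observation $ \br_{ik}=\bh_{ik}+\frac{\bz_{ik}}{\tau P} $ of exactly the form \eqref{train2}, where $ \bz_{ik} $ is zero mean, independent of $ \bh_{ik} $, with covariance $ \sigma^{2}\tau P\,\Id $.

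Since the LMMSE estimator is fixed by the first two moments alone, the (non-Gaussian) product structure of $ \bh_{ik} $ does not matter. I would only need that $ \EE\{\bh_{ik}\}=\b0 $, that $ \EE\{\bh_{ik}\bh_{ik}^{\H}\}=\bR_{ik} $ — known in closed form from \eqref{cor2}--\eqref{cor1} — and, by independence of the noise, that $ \EE\{\bh_{ik}\br_{ik}^{\H}\}=\bR_{ik} $ and $ \EE\{\br_{ik}\br_{ik}^{\H}\}=\bR_{ik}+\tfrac{\sigma^{2}}{\tau P}\Id=\bQ_{ik}^{-1} $, where $ \bQ_{ik}\defeq\big(\bR_{ik}+\tfrac{\sigma^{2}}{\tau P}\Id\big)^{-1} $. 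The standard LMMSE formula then delivers $ \hat{\bh}_{ik}=\EE\{\bh_{ik}\br_{ik}^{\H}\}\big(\EE\{\br_{ik}\br_{ik}^{\H}\}\big)^{-1}\br_{ik}=\bR_{ik}\bQ_{ik}\br_{ik} $, i.e.\ the same structure as \eqref{estim1}.

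To finish, I would invoke the orthogonality principle of LMMSE estimation: the residual $ \tilde{\bh}_{ik}=\bh_{ik}-\hat{\bh}_{ik} $ is uncorrelated with $ \br_{ik} $, hence with $ \hat{\bh}_{ik} $, and both $ \hat{\bh}_{ik} $ and $ \tilde{\bh}_{ik} $ are zero mean. A one-line second-moment computation then gives $ \bPsi_{ik}=\EE\{\hat{\bh}_{ik}\hat{\bh}_{ik}^{\H}\}=\bR_{ik}\bQ_{ik}\big(\EE\{\br_{ik}\br_{ik}^{\H}\}\big)\bQ_{ik}\bR_{ik}=\bR_{ik}\bQ_{ik}\bR_{ik} $, using $ \EE\{\br_{ik}\br_{ik}^{\H}\}=\bQ_{ik}^{-1} $ and $ \bQ_{ik}=\bQ_{ik}^{\H} $, and, via $ \bh_{ik}=\hat{\bh}_{ik}+\tilde{\bh}_{ik} $ together with uncorrelatedness, $ \bE_{ik}=\EE\{\tilde{\bh}_{ik}\tilde{\bh}_{ik}^{\H}\}=\bR_{ik}-\bPsi_{ik} $, which is precisely the content of \eqref{current1}.

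The step I expect to be the real crux is not any of this routine LMMSE algebra but rather the justification that the single-RIS cascaded channel can be cleanly extracted from the aggregate training observation — i.e.\ that switching off the companion surface, or the pilot design of \cite{Jiang2022}, indeed annihilates the double-reflection term $ \bG_{1}\bPhi_{1}\bD\bPhi_{2,w_{k}}\bg_{2k} $ without altering the statistics of $ \bh_{ik} $ or making it correlated with the residual noise. Once this is granted, the remaining computation is identical for a UE in the transmission region and one in the reflection region: only the numerical value of $ \bR_{ik} $ differs between the two cases, through the phase-shift-dependent traces in \eqref{cor2}--\eqref{cor1}, whereas the estimator $ \hat{\bh}_{ik}=\bR_{ik}\bQ_{ik}\br_{ik} $ and the variances $ \bPsi_{ik},\bE_{ik} $ take the same algebraic form. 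Accordingly, I would simply state that the proof proceeds along the same lines as that of Lemma~\ref{PropositionDirectChannel}.
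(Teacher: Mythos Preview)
Your proposal is correct and matches the paper's approach exactly: the paper's own proof consists of the single sentence ``The proof follows similar lines with the proof of Lemma~\ref{PropositionDirectChannel},'' and what you have written is precisely the natural unpacking of that sentence, including your closing remark that one would simply state the argument proceeds as for Lemma~\ref{PropositionDirectChannel}. Your additional care in flagging the isolation of the single-RIS link (via switching off the companion surface or the pilot design of \cite{Jiang2022}) as the only nontrivial prerequisite is appropriate, and the paper treats this point in exactly the same way in the paragraph preceding the lemma rather than inside the proof itself.
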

	\begin{proof}
		The proof follows similar lines with the proof of Lemma~\ref{PropositionDirectChannel}.	
	\end{proof}

	\begin{remark}\label{rem3}
		We have followed a typical approach for channel estimation because it presents several advantages: i) This method  provides the estimated cascaded channel vectors in closed-forms, while other methods such as \cite{He2019} do not result in closed-form expressions; ii) Channel estimation can be performed at every several coherence intervals since the estimated channels depend  on large-scale statistics.
	\end{remark}

	With the above 	setup, the superimposed channel from the BS to  UE $ k $, which includes the double-reflection link (BS$ \rightarrow$ RIS $ \rightarrow$STAR-RIS$ \rightarrow $UE $ k $) and the two single-reflection links (BS$ \rightarrow $RIS$ \rightarrow $UE $ k $ ) and (BS$ \rightarrow $STAR-RIS$ \rightarrow $UE $ k $)
	is written as
	\begin{align}
		\bar{\bh}_{k}&=\bh_{k}+\bh_{1k}+\bh_{2k}\\
		&= \bc_{k}+ \bG_{1}\bPhi_{1}\bD\bPhi_{2,w_{k}} \bg_{2k}+\bG_{1}\bPhi_{1}\bu_{1k}+\bU_{2}\bPhi_{2,w_{k}}\bg_{2k}  \bh_{1k},\label{superimposed channel}
	\end{align}
	where $ \bar{\bh}_{k} $ has zero mean and variance $\bar{\bR}_{k}= \bR_{0k}+ \bR_{1k}+ \bR_{2k}$. Its LMMSE estimate is
	\begin{align}
		\hat{	\bar{\bh}}_{k}&=\hat{\bh}_{k}+\hat{\bh}_{1k}+\hat{\bh}_{2k},
	\end{align}
	which has zero mean and variance $ \bar{\bPsi}_{k}=\bPsi_{k}+\bPsi_{1k}+\bPsi_{2} $, while the estimation channel error vector $ \bar{\bee}_{k} $ has zero mean and variance $ \bE_{k}+ \bE_{1k}+ \bE_{2k}$. In other words, we have
	\begin{align}
		\bar{\bh}_{k}=\hat{	\bar{\bh}}_{k}+\bar{\bee}_{k}. \label{estimatedchannel}
	\end{align}

	\section{Downlink Achievable Rate}\label{DownlinkAchievable}
	During the downlink data transmission from the BS to UE $ k $ in $ t  $ or $ r  $ region, the received signal by UE $ k $ is expressed based on channel reciprocity as
	\begin{align}
		r_{k}=\bar{\bh}^\H_{k}\bs+z_{k},\label{DLreceivedSignal}
	\end{align}
	where   $\bs=\sqrt{\lambda} \sum_{i=1}^{K}\sqrt{p_{i}}\bff_{i}l_{i}$ denotes the transmit signal vector  by the BS. Herein, $ p_{i} $ is  the power allocated to UE $ i $,  and $ \lambda $ is a constant which is found such that $ \EE[\mathbf{s}^{\H}\mathbf{s}]=\rho $, where $ \rho $ is the total average power budget.\footnote{Herein, we rely on a common assumption in the mMIMO literature, which is the adoption of equal power allocation among all UEs, i.e., $ p_i = \rho/K $ \cite{Hoydis2013}.} Note that $ \lambda $ is given by $ \lambda=\frac{K}{\EE\{\tr(\bF\bF^{\H})\}} $ to guarantee $  \EE[\mathbf{s}^{\H}\mathbf{s}]=\rho$,
	where $ \bF=[\bff_{1}, \ldots, \bff_{K}] \in\mathbb{C}^{M \times K}$. Moreover, $z_{k} \sim \cC\cN(0,\sigma^{2})$ is the additive  white complex Gaussian noise at UE $k$. Also,   $\bff_{i} \in \bbC^{M \times 1}$ is the linear precoding vector and $ l_{i} $ is     the corresponding data symbol with $ \EE\{|l_{i}|^{2}\}=1 $.

	Taking advantage of the technique in~\cite{Medard2000} and by exploiting that UEs do not have instantaneous CSI but are aware of only statistical CSI, the downlink SINR can be written as\footnote{We would like to mention that (25) is a known lower bound in the mMIMO literature, which is quite accurate according to the relevant literature, e.g., please see   [32], [35]. Apart from this, we have provided Monte-Carlo manipulations in Figures 2-4  that verify both the correctness and accuracy of the analytical results including this bound.}
	\begin{align}
		\gamma_{k}=	\frac{	S_{k}}{I_{k}},\label{sinr1}
	\end{align}
	where 	
	\begin{align}
		S_{k}&=|\EE\{\bar{\bh}_{k}^{\H}\bff_{k}\}|^{2} \label{Sig} \\
		I_{k}&=\EE\big\{ \big| \bar{\bh}_{k}^{\H}{\bff}_{k}-\EE\big\{
		\bar{\bh}_{k}^{\H}{\bff}_{k}\big\}\big|^{2}\big\}\!+\!\sum_{i\ne k}^{K}|\EE\{\bar{\bh}^\H_{k}\bff_{i}\}|^{2}\!+\!\frac{K\sigma^{2}}{\rho \lambda}.\label{Int}\end{align}
	
	Although both maximum ratio transmission (MRT) and  regularized zero-forcing (RZF) precoders are common options in the mMIMO literature for the downlink transmission, we select MRT for the sake of simplicity while RZF will be investigated in future work, i.e., $ \mathbf{f}_k = \hat{\bar{\mathbf{h}}}_k $.
	
	The following analysis requires $ M $, $ N $, and $ K $ increase  but with a given bounded ratio as $ 0 < \lim \inf \frac{K
	}{M}\le \lim \sup \frac{K
	}{M}< \infty $  and $ 0 < \lim \inf \frac{N
	}{M}\le \lim \sup \frac{
	}{M}< \infty $. Henceforth, this notation is denoted as $ \asymp$. Also, the covariance matrices obey similar assumptions provided in \cite[Assump. A1-A3]{Hoydis2013}. Note that the DE analysis allows obtaining  deterministic expressions, which makes lengthy Monte-Carlo simulations unnecessary. In parallel, deterministic expressions are tight approximations even for conventional systems with moderate  dimensions, e.g., an $ 8 \times 8 $ matrix~\cite{Couillet2011}.

	\begin{proposition}\label{Proposition:DLSINR}
		The downlink achievable SINR of UE $k$ with MRT precoding for  given PBs $ \bPhi_{1} $ and $ \Phi_{2, w_k} $  in a RIS/STAR-RIS assisted mMIMO system, accounting for imperfect CSI and correlated Rayleigh fading, is given by \eqref{sinr1}, 		where
		\begin{align}
			{S}_{k}&\asymp\tr^{2}\left(\bar{\bPsi}_{k}\right)\!,\label{Num1}\\
			{I}_{k}&\asymp\sum_{i =1}^{K}\tr\!\left(\bar{\bR}_{k}\bar{\bPsi}_{i} \right)-\tr\left( \bar{\bPsi}_{k}^{2}\right)+\frac{K\sigma^{2}}{ \rho}\sum_{i=1}^{K}\tr(\bar{\bPsi}_{i}).\label{Den1}
		\end{align}
	\end{proposition}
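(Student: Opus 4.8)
The plan is to apply the deterministic-equivalent machinery of \cite{Hoydis2013} to the MRT precoder $\bff_k = \hat{\bar{\bh}}_k$, evaluating the three terms of the SINR in \eqref{sinr1} in turn. First I would compute the signal term $S_k = |\EE\{\bar{\bh}_k^{\H}\bff_k\}|^2$. Writing $\bar{\bh}_k = \hat{\bar{\bh}}_k + \bar{\bee}_k$ and using the orthogonality of the LMMSE estimate and its error (so $\EE\{\bar{\bee}_k^{\H}\hat{\bar{\bh}}_k\}=0$), this reduces to $|\EE\{\|\hat{\bar{\bh}}_k\|^2\}|^2 = \tr^2(\bar{\bPsi}_k)$, since $\hat{\bar{\bh}}_k$ has covariance $\bar{\bPsi}_k = \bPsi_k + \bPsi_{1k} + \bPsi_{2k}$ from \eqref{estimatedchannel}. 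This gives \eqref{Num1} directly and requires no asymptotics — the $\asymp$ is only needed because elsewhere one normalizes by $M$.

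Next I would handle the interference term $I_k$ in \eqref{Int}, which splits into three pieces. For the leakage/interference sum $\sum_{i\ne k}|\EE\{\bar{\bh}_k^{\H}\bff_i\}|^2$, note that for $i \ne k$ the channels $\bar{\bh}_k$ and $\hat{\bar{\bh}}_i$ are independent (different UEs use orthogonal pilots and the underlying fading is independent across UEs), so each expectation $\EE\{\bar{\bh}_k^{\H}\hat{\bar{\bh}}_i\} = \EE\{\bar{\bh}_k\}^{\H}\EE\{\hat{\bar{\bh}}_i\} = 0$; hence this entire sum vanishes. For the noise term, $\frac{K\sigma^2}{\rho\lambda}$ with $\lambda = K/\EE\{\tr(\bF\bF^{\H})\}$ gives $\frac{\sigma^2}{\rho}\EE\{\tr(\bF\bF^{\H})\} = \frac{\sigma^2}{\rho}\sum_{i=1}^K \tr(\bar{\bPsi}_i)$, matching the last term of \eqref{Den1} up to the factor $K$ — I would double-check the $K$ bookkeeping against the normalization of $\bs$ and $\lambda$. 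The remaining variance term $\EE\{|\bar{\bh}_k^{\H}\bff_k - \EE\{\bar{\bh}_k^{\H}\bff_k\}|^2\}$ is the substantive one: expanding $\bar{\bh}_k = \hat{\bar{\bh}}_k + \bar{\bee}_k$ and $\bff_k = \hat{\bar{\bh}}_k$, the error part contributes $\EE\{\bar{\bee}_k^{\H}\hat{\bar{\bh}}_k\hat{\bar{\bh}}_k^{\H}\bar{\bee}_k\} \asymp \tr(\bar{\bE}_k\bar{\bPsi}_k)$ by conditioning on $\hat{\bar{\bh}}_k$ and using $\EE\{\bar{\bee}_k\bar{\bee}_k^{\H}\} = \bar{\bE}_k$, while the estimate part contributes $\mathrm{Var}(\|\hat{\bar{\bh}}_k\|^2) \asymp \tr(\bar{\bPsi}_k^2)$ (the variance of a quadratic form in a Gaussian vector, which concentrates). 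Adding the leakage-type contribution $\sum_{i\ne k}\tr(\bar{\bR}_k\bar{\bPsi}_i)$ that arises from the full second moment of $\bar{\bh}_k^{\H}\bff_i$ before subtracting its (zero) mean, and combining $\tr(\bar{\bE}_k\bar{\bPsi}_k) + \tr(\bar{\bPsi}_k^2) + \tr(\bar{\bR}_k\bar{\bPsi}_k) - \tr(\bar{\bPsi}_k^2)$ using $\bar{\bE}_k = \bar{\bR}_k - \bar{\bPsi}_k$, one recovers $\sum_{i=1}^K \tr(\bar{\bR}_k\bar{\bPsi}_i) - \tr(\bar{\bPsi}_k^2)$ as in \eqref{Den1}.

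The main obstacle will be the careful separation of which cross-terms survive and which average to zero, together with the concentration (trace-lemma / rank-one perturbation) arguments that justify replacing random quadratic forms by their deterministic equivalents under the regime $M,N,K\to\infty$ with bounded ratios. Because $\bar{\bh}_k$ is a sum of three sub-channels (double-reflection, and the two single-reflection links) that are mutually dependent through the shared matrices $\bG_1$, $\bPhi_1$, $\bD$, $\bg_{2k}$, I would need to verify that, after taking expectations against the independent Gaussian factors $\tilde{\bG}$, $\tilde{\bD}$, $\tilde{\bU}$ (using the representations $\bG_1 = \sqrt{\beta_{t1}}\bR_1^{1/2}\tilde{\bG}\bR_t^{1/2}$ etc. introduced around \eqref{cov3}), the effective covariance of $\bar{\bh}_k$ is exactly $\bar{\bR}_k = \bR_{0k}+\bR_{1k}+\bR_{2k}$ with no residual cross-covariance between the sub-channels — this is the place where the independence of $\bD$, $\bG_1$, $\bg_{2k}$, $\bu_{1k}$, $\bU_2$ is essential and must be invoked explicitly. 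Once the covariance structure is pinned down, the remaining steps are the standard deterministic-equivalent computations and follow the template of \cite[Assump. A1--A3]{Hoydis2013}; I would relegate the detailed manipulations to an appendix and present here only the reduction to the three terms above.
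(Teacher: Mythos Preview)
Your overall route is the same one the paper takes: exploit LMMSE orthogonality for $S_k$, split the self-variance via $\bar{\bh}_k=\hat{\bar{\bh}}_k+\bar{\bee}_k$, use a trace-type concentration for the cross terms, and evaluate $\lambda$ for the noise. Two points, however, need repair.

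First, the bookkeeping that gets you from the self-variance to \eqref{Den1} does not close. You keep $\mathrm{Var}(\|\hat{\bar{\bh}}_k\|^2)\asymp\tr(\bar{\bPsi}_k^{2})$ and then combine it with $\tr(\bar{\bE}_k\bar{\bPsi}_k)$ and the $i\neq k$ leakage; but $\tr(\bar{\bE}_k\bar{\bPsi}_k)+\tr(\bar{\bPsi}_k^{2})=\tr(\bar{\bR}_k\bar{\bPsi}_k)$, so after adding $\sum_{i\neq k}\tr(\bar{\bR}_k\bar{\bPsi}_i)$ you obtain $\sum_{i=1}^K\tr(\bar{\bR}_k\bar{\bPsi}_i)$, \emph{not} $\sum_{i=1}^K\tr(\bar{\bR}_k\bar{\bPsi}_i)-\tr(\bar{\bPsi}_k^{2})$. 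The $-\tr(\bar{\bPsi}_k^{2})$ in the proposition appears precisely because the paper \emph{drops} $\mathrm{Var}(\|\hat{\bar{\bh}}_k\|^{2})$ in the DE regime: by the trace lemma one has $\frac{1}{M}\|\hat{\bar{\bh}}_k\|^{2}\asymp\frac{1}{M}\tr(\bar{\bPsi}_k)$ almost surely, so only the error contribution $\EE\{|\bar{\bee}_k^{\H}\hat{\bar{\bh}}_k|^{2}\}=\tr(\bar{\bE}_k\bar{\bPsi}_k)=\tr(\bar{\bR}_k\bar{\bPsi}_k)-\tr(\bar{\bPsi}_k^{2})$ survives. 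Your expression ``$\tr(\bar{\bE}_k\bar{\bPsi}_k)+\tr(\bar{\bPsi}_k^{2})+\tr(\bar{\bR}_k\bar{\bPsi}_k)-\tr(\bar{\bPsi}_k^{2})$'' does not equal either quantity and should be rewritten.

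Second, $\hat{\bar{\bh}}_k$ is \emph{not} Gaussian here, because $\bar{\bh}_k$ contains the product $\bG_1\bPhi_1\bD\bPhi_{2,w_k}\bg_{2k}$ of independent Gaussian factors. Hence invoking the Gaussian quadratic-form variance formula for $\mathrm{Var}(\|\hat{\bar{\bh}}_k\|^{2})$ is not justified; the paper avoids this by relying solely on the trace lemma \cite[Lem.~4]{Papazafeiropoulos2015a}, which does not require Gaussianity. Use that lemma (or an equivalent concentration result) instead of the Gaussian identity. Your observations that the interference sum in \eqref{Int} should read $\sum_{i\neq k}\EE\{|\bar{\bh}_k^{\H}\bff_i|^{2}\}$ and that the $K$ factor in the noise term hinges on the exact normalization of $\lambda$ are both well taken; the paper's proof indeed computes $\EE\{|\bar{\bh}_k^{\H}\hat{\bar{\bh}}_i|^{2}\}$ for the cross terms.
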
 
	\begin{proof}
		Please see Appendix~\ref{Proposition1}.	
	\end{proof}
	
	The downlink achievable  sum SE is given by
	\begin{align}
		\mathrm{SE}	=\frac{\tau_{\mathrm{c}}-\tau}{\tau_{\mathrm{c}}}\sum_{k=1}^{K}\log_{2}\left ( 1+\gamma_{k}\right)\!,\label{LowerBound}
	\end{align}
	where   the pre-log fraction describes   the percentage of samples per coherence block  for downlink data transmission.
	
	\section{Problem Formulation for RIS and STAR-RIS}\label{ProblemFormulation1}
	
	Based on the common assumption of infinite resolution phase shifters, we propose an alternating optimization algorithm for designing the cooperative reflecting beamforming by optimizing the sum SE with imperfect CSI and correlated fading, which is formulated as
	\begin{equation}
		\begin{IEEEeqnarraybox}[][c]{rl}
			\max_{\bar{\thetv},\thetv,\betv}&\quad\mathrm{SE}(\bar{\thetv},\thetv,\betv)\\
			\mathrm{s.t}&	\quad|\bar{\theta}_{n}|=1, ~\forall n \in \mathcal{N}_{1}\\
			&\quad (\beta_{n}^{t})^{2}+(\beta_{n}^{r})^{2}=1,  \forall n \in \mathcal{N}_{2}\\
			&\quad\beta_{n}^{t}\ge 0, \beta_{n}^{r}\ge 0,~\forall n \in \mathcal{N}_{2}\\
			&\quad|\theta_{n}^{t}|=|\theta_{n}^{r}|=1, ~\forall n \in \mathcal{N}_{2}
		\end{IEEEeqnarraybox}\label{Maximization}\tag{$\mathcal{P}1$}
	\end{equation}
	where $\thetv=[(\thetv^{t})^{\T}, (\thetv^{r})^{\T}]^{\T} $ and $\betv=[(\betv^{t})^{\T}, (\betv^{r})^{\T}]^{\T} $. We have put the variables to be optimized in parentheses to emphasize their presence.

	Obviously, the  problem \eqref{Maximization} is non-convex. Also, a coupling among the optimization variables appears, which are the phase-shifts of  RIS 1 as well as the amplitudes and the phase shifts of the STAR-RIS for transmission and reflection. For the sake of exposition, we define the following three sets describing the feasible set of \eqref{Maximization}: $ \bar{\Theta}=\{\bar{\thetv}\ |\ |\bar{\theta}_{i}=1,i=1,2,\ldots N_{1}\} $, 
	$ \Theta=\{\thetv\ |\ |\theta_{i}^{t}|=|\theta_{i}^{r}|=1,i=1,2,\ldots N_{2}\} $, and $ \mathcal{B}=\{\betv\ |\ (\beta_{i}^{t})^{2}+(\beta_{i}^{r})^{2}=1,\beta_{i}^{t}\geq0,\beta_{i}^{r}\geq0,i=1,2,\ldots N_{2}\} $. We observe that the project operators of the above sets can be obtained in closed-form, which motivates the application of the  PGAM \cite[Ch. 2]{Bertsekas1999}  for the optimization of $\bar{\thetv}$,  $\thetv$, and $\betv$ as follows. In particular, since the optimization problem includes two PBs, which are $ \bPhi_{1} $ and $ \bPhi_{2,w_{k}} $, we perform alternating optimization by keeping one PB fixed while optimizing the other  in an iterative manner until reaching  convergence to a stationary point.
	\begin{remark}
		Under independent Rayleigh fading conditions and according to Remark \ref{rem1}, the sum rate in \eqref{LowerBound} is independent of the phase shifts of the RIS and the STAR-RIS, but it can be optimized with respect to the amplitudes of the STAR-RIS.
	\end{remark}
	
	Thus, we have to apply PGAM two times, one for each surface. Note that each algorithm below achieves a local optimum, i.e., different initializations provide different solutions. In other words, the overall algorithm, based on alternating optimization, results in a local optimum. With this procedure and given the total average power budget constraint regarding \eqref{Den1}, the sum-rate increases  until convergence.

	The proposed algorithms below, i.e., Algorithms 1 and 2 converge quickly and have low computation complexity. Moreover, given that both algorithms achieve a local optimum and that the overall algorithm is based on alternating optimization (AO), the final solution corresponds to a local optimum, which means that different initializations will result in different solutions, as will be shown below in Sec. VI.
	
	\subsection{Problem Formulation for RIS}\label{ProblemFormulation2}
	The formulation problem for  RIS $ 1 $ is written as
	\begin{align}
		\begin{IEEEeqnarraybox}[][c]{rl}
			\max_{\bar{\thetv}}&\quad\mathrm{SE}(\bar{\thetv})\\
			\mathrm{s.t}&	\quad|\bar{\theta}_{n}|=1, ~\forall n \in \mathcal{N}_{1}\\
		\end{IEEEeqnarraybox}\label{Maximization1}\tag{$\mathcal{P}2$}, 
	\end{align}
	where we define the set $ \bar{\Theta}=\{\bar{\thetv}\ |\ |\bar{\theta}_{i}|=1,i=1,\ldots N_{1}\} $.
	
	Given the non-convexity of \eqref{Maximization1} in terms of $ \bPhi_{1}  $ and with a unit-modulus constraint regarding $ \bar{\theta}_{n} $,  PGAM is a very good candidate as mentioned. Hence, the first PGAM, concerning the RIS,  includes the following iteration
	\begin{align}
		\bar{\thetv}^{n+1}&=P_{\bar{\Theta}}(\bar{\thetv}^{n}+\mu_{1,n}\nabla_{\bar{\thetv}}\mathrm{SE}(\bar{\thetv}^{n})).\label{step1}  
	\end{align}
	
	The superscript expresses the iteration count while we move toward the gradient direction to increase the objective. Note that $\mu_{1,n}$ is the step size for  $\bar{\thetv}$. The difficulty of the problem does not allow to obtain the ideal step size, which should be equal to the inversely proportional of the Lipschitz constant. Thus, to find the step size at each iteration, we apply the Armijo-Goldstein backtracking line search by defining  a quadratic approximation of $\mathrm{SE}(\bar{\thetv})$ as
	\begin{align}
		&	\bar{Q}_{\mu_{1}}(\bar{\thetv} ;\bx)=\mathrm{SE}(\bar{\thetv})+\langle	\nabla_{\bar{\thetv}}\mathrm{SE}(\bar{\thetv}),\bx-\bar{\thetv}\rangle-\frac{1}{\mu_{1}}\|\bx-\bar{\thetv}\|^{2}_{2}.
	\end{align}    
	
	The step size $  \mu_{1,n} $ in \eqref{step1}, used at iteration $n$ as the initial step size at iteration $n+1$, can be obtained as $ \mu_{1,n} = L_{1,n}\kappa_{1}^{m_{1,n}} $, where $ L_{1,n}>0 $,  $ \kappa_{1} \in (0,1) $, and $ m_{1,n} $ is the
	smallest nonnegative integer satisfying
	\begin{align}
		\mathrm{SE}(\bar{\thetv}^{n+1})\geq	Q_{L_{n}\kappa^{m_{n}}}(\bar{\thetv}^{n};\bar{\thetv}^{n+1}),
	\end{align}
	which is performed by an iterative procedure. The proposed PGAM is described in Algorithm \ref{Algoa1}. 
	\begin{algorithm}[th]
		\caption{Projected Gradient Ascent Algorithm for  RIS 1 Design\label{Algoa1}}
		\begin{algorithmic}[1]
			\STATE Input: $\bar{\thetv}^{0},\mu_{1,n}>0$, $\kappa_{1}\in(0,1)$
			\STATE $n\gets1$
			\REPEAT
			\REPEAT \label{ls:start}
			\STATE $\bar{\thetv}^{n+1}=P_{\bar{\Theta}}(\bar{\thetv}^{n}+\mu_{1,n}\nabla_{\bar{\thetv}}\mathrm{SE}(\bar{\thetv}^{n}))$
			\IF{ $\mathrm{SE}(\bar{\thetv}^{n+1})\leq \bar{Q}_{\mu_{1,n}}(\bar{\thetv}^{n};\bar{\thetv}^{n+1})$}
			\STATE $\mu_{1,n}=\mu_{1,n}\kappa_{1}$
			\ENDIF
			\UNTIL{ $\mathrm{SE}(\bar{\thetv}^{n+1})>\bar{Q}_{\mu_{1,n}}(\bar{\thetv}^{n};\bar{\thetv}^{n+1})$}\label{ls:end}
			\STATE $\mu_{1,n+1}\leftarrow\mu_{1, n}$
			\STATE $n\leftarrow n+1$
			\UNTIL{ convergence}
			\STATE Output: $\bar{\thetv}^{n+1}$
		\end{algorithmic}
	\end{algorithm} 
	
	\begin{proposition}\label{PropGradients}
		The complex gradient $ \nabla_{\bar{\thetv}}\mathrm{SE}(\bar{\thetv}) $ is given in closed-form by
		\begin{align}
			\nabla_{\bar{\thetv}}\mathrm{SE}(\bar{\thetv})&=\frac{\tau_{\mathrm{c}}-\tau}{\tau_{\mathrm{c}}\log2}\sum_{k=1}^{K}\frac{	I_{k}\nabla_{\bar{\thetv}}{S_{k}}-S_{k}	\nabla_{\bar{\thetv}}{I_{k}}}{(1+\gamma_{k})I_{k}^{2}},
		\end{align}
		where
		\begin{align}
		&	\nabla_{\bar{\thetv}}S_{k}=\nu_{0k}\diag\bigl(\bB_{1}\bigr)+\nu_{1k}\diag\bigl(\bB_{2}\bigr) \label{derivbartheta},\\
		&	\nabla_{\bar{\thetv}}I_k=\frac{\partial}{\partial\boldsymbol{\bar{\theta}}^{\ast}}I_{k} \nn\\
			& =\diag\bigl(\bar{\nu}_{1k}\bB_{1}+\bar{\nu}_{2k}\bB_{2}+\sum\nolimits_{i=1}^{K}(\tilde{\nu}_{ki1}\bB_{1}+\tilde{\nu}_{ki2}\bB_{2})\bigr),
		\end{align}
		with $ \bB_{1}=\bR_{1}\bPhi_{1}\bR_{2} $, $ \bB_{2}=\bR_{t} \bPhi_{1}\bR_{1} $, $ \nu_{0k}=2\tr(\bar{\boldsymbol{\Psi}}_{k})\tr\bigl(\bar{C}\bQ_{0k} \bR_{0k} 	\bR_{t}-\bar{C}\bQ_{1k}\bR_{0k}^{2}\bQ_{1k}	\bR_{t}+\bar{C}\bR_{0k}\bQ_{0k}	\bR_{t}\bigr) $, $ \nu_{1k}=2\tr(\bar{\boldsymbol{\Psi}}_{k})\tr\bigl(\hat{\beta}_{1k}\bQ_{1k}\bR_{1k}\bR_{1}-\hat{\beta}_{1k}\bQ_{1k}\bR_{1k}^{2}\bQ_{1k}\bR_{1}+\hat{\beta}_{1k}\bR_{1k}\bQ_{1k}\bR_{1}\bigr) $, $\bar{\nu}_{1k}=\bar{C}	\tr\bigl(\check{\bar{\boldsymbol{\Psi}}}_{k}\bR_{t}\bigr)$,$\bar{\nu}_{2k}=\hat{\beta}_{1k}\tr\bigl(\check{\bar{\boldsymbol{\Psi}}}_{k}\bR_{1}\bigr)$, $\tilde{\nu}_{ki1}=\bar{C}\tr\bigl(\tilde{\mathbf{R}}_{ki}\bR_{t}\bigr)$, and $\tilde{\nu}_{ki2}=\hat{\beta}_{1k}\tr\bigl(\tilde{\mathbf{R}}_{ki}\bR_{1}\bigr)$.
	\end{proposition}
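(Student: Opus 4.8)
The goal is to differentiate the closed-form sum SE with respect to the RIS~$1$ phase-shift vector $\bar{\thetv}$, treating the amplitudes and $\thetv$ as fixed. Since $\mathrm{SE}=\frac{\tau_{\mathrm{c}}-\tau}{\tau_{\mathrm{c}}}\sum_{k}\log_2(1+\gamma_k)$ with $\gamma_k=S_k/I_k$, the outermost layer is just the chain rule: $\nabla_{\bar{\thetv}}\log_2(1+S_k/I_k)=\frac{1}{\log 2}\,\frac{I_k\nabla_{\bar{\thetv}}S_k-S_k\nabla_{\bar{\thetv}}I_k}{(1+\gamma_k)I_k^2}$, which gives the stated outer expression immediately. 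So the real work is computing $\nabla_{\bar{\thetv}}S_k$ and $\nabla_{\bar{\thetv}}I_k$ from Proposition~\ref{Proposition:DLSINR}, i.e.\ from $S_k\asymp\tr^2(\bar{\bPsi}_k)$ and $I_k\asymp\sum_i\tr(\bar{\bR}_k\bar{\bPsi}_i)-\tr(\bar{\bPsi}_k^2)+\frac{K\sigma^2}{\rho}\sum_i\tr(\bar{\bPsi}_i)$.

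\textbf{Key steps.} First I would trace the dependence of all the relevant deterministic matrices on $\bPhi_1$. From \eqref{cov3}, $\bR_{0k}=\bar\beta_k\bR_t+\hat\beta_k\,\tr(\bR_1\bPhi_{2,w_k}\bR_2\bPhi_{2,w_k}^{\H})\,\tr(\bR_1\bPhi_1\bR_2\bPhi_1^{\H})\,\bR_t$, so the only $\bPhi_1$-dependence in $\bR_{0k}$ is through the scalar $\tr(\bR_1\bPhi_1\bR_2\bPhi_1^{\H})$; similarly from \eqref{cor1}, $\bR_{1k}=\hat\beta_{1k}\tr(\bR_t\bPhi_1\bR_1\bPhi_1^{\H})\bR_1$ depends on $\bPhi_1$ only through $\tr(\bR_t\bPhi_1\bR_1\bPhi_1^{\H})$, while $\bR_{2k}$ is independent of $\bPhi_1$. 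The central computation is therefore the Wirtinger derivative of these two quadratic forms: using $\bPhi_1=\diag(\bar\thetv)$ one has $\tr(\bR_1\bPhi_1\bR_2\bPhi_1^{\H})=\bar\thetv^{\H}(\bR_1\odot\bR_2^{\T})\bar\thetv$ up to conjugation bookkeeping, whose gradient with respect to $\bar\thetv^\ast$ is $\diag(\bR_1\bPhi_1\bR_2)=\diag(\bB_1)$, and likewise $\partial_{\bar\thetv^\ast}\tr(\bR_t\bPhi_1\bR_1\bPhi_1^{\H})=\diag(\bR_t\bPhi_1\bR_1)=\diag(\bB_2)$. This is where $\bB_1$ and $\bB_2$ enter. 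Next I would propagate these scalar derivatives through $\bR_{0k}$, $\bR_{1k}$, then through $\bQ_{0k}=(\bR_{0k}+\frac{\sigma^2}{\tau P}\Id)^{-1}$ via $\partial\bQ_{0k}=-\bQ_{0k}(\partial\bR_{0k})\bQ_{0k}$, then through $\bPsi_k=\bR_{0k}\bQ_{0k}\bR_{0k}$ and $\bPsi_{1k}=\bR_{1k}\bQ_{1k}\bR_{1k}$, and finally through $\bar\bPsi_k=\bPsi_k+\bPsi_{1k}+\bPsi_{2k}$ and $\bar\bR_k=\bR_{0k}+\bR_{1k}+\bR_{2k}$. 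Applying $\partial\tr(\cdot)$ and collecting terms produces the scalar prefactors: the three-term structure $\bQ_{0k}\bR_{0k}\bR_t-\bQ_{1k}\bR_{0k}^2\bQ_{1k}\bR_t+\bR_{0k}\bQ_{0k}\bR_t$ in $\nu_{0k}$ is exactly $\partial_s\tr(\bR_{0k}\bQ_{0k}\bR_{0k})$ evaluated with $\partial_s\bR_{0k}\propto\bR_t$, and analogously for $\nu_{1k}$ with $\bR_{0k}\to\bR_{1k}$, $\bR_t\to\bR_1$. The $I_k$ gradient is the same machinery applied to $\sum_i\tr(\bar\bR_k\bar\bPsi_i)-\tr(\bar\bPsi_k^2)+\frac{K\sigma^2}{\rho}\sum_i\tr(\bar\bPsi_i)$, which is why its prefactors $\bar\nu_{1k},\bar\nu_{2k},\tilde\nu_{ki1},\tilde\nu_{ki2}$ involve auxiliary matrices $\check{\bar\bPsi}_k$ and $\tilde\bR_{ki}$ (the derivatives of $\bar\bPsi_k$ and of the bilinear $\tr(\bar\bR_k\bar\bPsi_i)$ terms, contracted appropriately); I would define these precisely and verify the bookkeeping.

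\textbf{Main obstacle.} The conceptual steps are routine matrix calculus, so the difficulty is purely organizational: keeping the chain of substitutions $\bPhi_1\to$ scalars $\to\bR_{0k},\bR_{1k}\to\bQ_{0k},\bQ_{1k}\to\bPsi_k,\bPsi_{1k}\to\bar\bPsi_k,\bar\bR_k\to S_k,I_k$ consistent, and correctly handling the Wirtinger ($\partial/\partial\bar\thetv^\ast$) convention so that the unit-modulus projection in Algorithm~\ref{Algoa1} uses the right object. A secondary subtlety is that $\bQ_{0k}$ and $\bQ_{1k}$ differ (the former has the $\frac{\sigma^2}{\tau P}$ regularizer, the latter the analogous one for $\bR_{1k}$), and one must be careful that $\bR_{0k}$ appears inside $\bPsi_k$ but it is $\bQ_{1k}$, not $\bQ_{0k}$, that shows up in the middle term of $\nu_{0k}$ — this reflects the derivative of $\bR_{0k}^2\bQ_{0k}$-type terms after the $\bQ$-derivative identity is applied and terms are regrouped. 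I would therefore present the derivation in the Appendix by first isolating $\partial_s\bR_{0k}$, $\partial_s\bR_{1k}$ as the single source of $\bPhi_1$-dependence, then stating a short lemma for $\partial_s\tr(\bR\,\bQ(s)\,\bR)$ where $\bR=\bR(s)$, and finally assembling, which keeps the algebra mechanical and auditable.
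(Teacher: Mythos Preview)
Your approach is essentially identical to the paper's: the appendix proof also applies the outer chain rule, expands $d(\bar{\bPsi}_{k})$ via the product rule together with $d(\bQ)=-\bQ\,d(\bR)\,\bQ$, reduces $d(\bR_{0k})$ and $d(\bR_{1k})$ to the scalar differentials $\tr(\bB_{1}\,d(\bPhi_{1}^{\H}))$ and $\tr(\bB_{2}\,d(\bPhi_{1}^{\H}))$, and then reads off the coefficients of $d(\bar{\thetv}^{\ast})$ as $\diag(\bB_{1})$ and $\diag(\bB_{2})$. Regarding the appearance of $\bQ_{1k}$ rather than $\bQ_{0k}$ in the middle term of $\nu_{0k}$ that you flagged as a subtlety: this is a typographical slip in the paper's stated formula and intermediate equation, not a genuine regrouping, since the identity $d(\bQ_{0k})=-\bQ_{0k}\,d(\bR_{0k})\,\bQ_{0k}$ plainly yields $-\bQ_{0k}\bR_{0k}^{2}\bQ_{0k}\bR_{t}$ there, so your own bookkeeping is the correct one.
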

	\begin{proof}
		Please see Appendix~\ref{Prop2}.	
	\end{proof}
	
	Note that the projection onto the set  $ \bar{\Theta} $ for a given $\bar{\thetv}\in \mathbb{C}^{2N_{1}\times 1}$   is given by 
	\begin{equation}
		P_{\bar{\Theta}}(\bar{\thetv})=\bar{\thetv}/|\bar{\thetv}|=e^{j\angle\bar{\thetv}}.
	\end{equation}
	with the operations in the right-hand taking place entry-wise.

	\subsection{Problem Formulation for STAR-RIS}\label{ProblemFormulation3}
	In the interesting case of the STAR-RIS, the optimization problem reads as
	\begin{equation}
		\begin{IEEEeqnarraybox}[][c]{rl}
			\max_{\thetv,\betv}&\quad\mathrm{SE}(\thetv,\betv)\\
			&\quad (\beta_{n}^{t})^{2}+(\beta_{n}^{r})^{2}=1,  \forall n \in \mathcal{N}_{2}\\
			&\quad\beta_{n}^{t}\ge 0, \beta_{n}^{r}\ge 0,~\forall n \in \mathcal{N}_{2}\\
			&\quad|\theta_{n}^{t}|=|\theta_{n}^{r}|=1, ~\forall n \in \mathcal{N}_{2}
		\end{IEEEeqnarraybox}\label{Maximization2}\tag{$\mathcal{P}3$},
	\end{equation}
	where the   feasible set of \eqref{Maximization2} can be  defined for the sake of exposition by the sets  $ \Theta=\{\thetv\ |\ |\theta_{i}^{t}|=|\theta_{i}^{r}|=1,i=1,2,\ldots N\} $, and $ \mathcal{B}=\{\betv\ |\ (\beta_{i}^{t})^{2}+(\beta_{i}^{r})^{2}=1,\beta_{i}^{t}\geq0,\beta_{i}^{r}\geq0,i=1,2,\ldots N\} $. This problem is non-convex, and includes coupling among the amplitudes and the phase shifts for transmission and reflection. Thus, we suggest the application of PGAM, which  includes the following iterations
	\begin{subequations}\label{mainiteration}\begin{align}
			\thetv^{n+1}&=P_{\Theta}(\thetv^{n}+\mu_{2,n}\nabla_{\thetv}\mathrm{SE}(\thetv^{n},\betv^{n})),\label{step2} \\ \betv^{n+1}&=P_{\mathcal{B}}(\betv^{n}+{\mu}_{2,n}\nabla_{\betv}\mathrm{SE}(\thetv^{n},\betv^{n})),\label{step3} \end{align}
	\end{subequations}
	where the step size  $  \mu_{2,n} $  can be obtained as $ \mu_{2,n} = L_{2,n}\kappa_{2}^{m_{2,n}} $ with $ m_{2,n} $ being the smallest nonnegative integer satisfying
	\begin{align}
		\mathrm{SE}(\thetv^{n+1},\betv^{n+1})\geq	Q_{L_{2,n}\kappa_{2}^{m_{2,n}}}(\thetv^{n}, \betv^{n};\thetv^{n+1},\betv^{n+1}).
	\end{align}
	
	Herein, the selection of the step size in \eqref{step1} and \eqref{step2} is crucial for the convergence of the PGAM. Again, we resort to the application of the Armijo-Goldstein backtracking line search to acquire the step size. Hence, we define the quadratic approximation
	\begin{align}
		&	Q_{\mu_{2}}(\thetv, \betv;\bx,\by)=\mathrm{SE}(\thetv,\betv)+\langle	\nabla_{\thetv}\mathrm{SE}(\thetv,\betv),\bx-\thetv\rangle\nn\\
		&-\frac{1}{\mu_{2}}\|\bx-\thetv\|^{2}_{2}+\langle\nabla_{\betv}\mathrm{SE}(\thetv,\betv),\by-\betv\rangle-\frac{1}{\mu_{2}}\|\by-\betv\|^{2}_{2}.
	\end{align}

	The suggested  PGAM is outlined in Algorithm \ref{Algoa2}. 
	\begin{algorithm}[th]
		\caption{Projected Gradient Ascent Algorithm for the STAR-RIS Design\label{Algoa2}}
		\begin{algorithmic}[1]
			\STATE Input: $\thetv^{0},\betv^{0},\mu_{2,n}>0$, $\kappa_{2}\in(0,1)$
			\STATE $n\gets1$
			\REPEAT
			\REPEAT \label{ls:start}
			\STATE $\thetv^{n+1}=P_{\Theta}(\thetv^{n}+\mu_{2,n}\nabla_{\thetv}\mathrm{SE}(\thetv^{n},\betv^{n}))$
			\STATE $\betv^{n+1}=P_{B}(\betv^{n}+\mu_{2,n}\nabla_{\betv}\mathrm{SE}(\thetv^{n},\betv^{n}))$
			\IF{ $\mathrm{SE}(\thetv^{n+1},\betv^{n+1})\leq Q_{\mu_{2,n}}(\thetv^{n},\betv^{n};\thetv^{n+1},\betv^{n+1})$}
			\STATE $\mu_{2,n}=\mu_{2,n}\kappa_{2}$
			\ENDIF
			\UNTIL{ $\mathrm{SE}(\thetv^{n+1},\betv^{n+1})>Q_{\mu_{2,n}}(\thetv^{n},\betv^{n};\thetv^{n+1},\betv^{n+1})$}\label{ls:end}
			\STATE $\mu_{2,n+1}\leftarrow\mu_{2,n}$
			\STATE $n\leftarrow n+1$
			\UNTIL{ convergence}
			\STATE Output: $\thetv^{n+1},\betv^{n+1}$
		\end{algorithmic}
	\end{algorithm} 
	
	The description of Algorithm \ref{Algoa2} concludes by providing the projection onto the sets $ \Theta $ and $ \mathcal{B} $. The former is given for a given $ \Theta $ and $ \mathcal{B} $ by the entry-wise operation as
	\begin{align}
		P_{\Theta}(\thetv)=\thetv/|\thetv|=e^{j\angle\thetv}.
	\end{align}
	
	In the case of $P_{ \mathcal{B} }(\betv)$, we observe that the  constraint $(\beta_{i}^{t})^{2}+(\beta_{i}^{r})^{2}=1,\beta_{i}^{t}\geq0,\beta_{i}^{r}\geq0$ corresponds to  the first quadrant of the  unit circle, which makes $P_{ \mathcal{B} }(\betv)$ complicated. To improve the efficiency of $P_{ \mathcal{B} }(\betv)$ during the iterative process, we allow   $\beta_{i}^{t}$ and $\beta_{i}^{r}$ to take negative value without affecting the optimality of the proposed solution, while achieving the same objective. Hence,  after projecting $\beta_{i}^{t}$ and $\beta_{i}^{r}$ onto the entire unit circle,  $P_{ \mathcal{B} }(\betv)$ is written as 
	\begin{subequations}
		\begin{align}
			\left[\ensuremath{P_{\mathcal{B}}(}\boldsymbol{\beta})\right]_{i} & =\frac{{\beta}_{i}}{\sqrt{{\beta}_{i}^{2}+{\beta}_{i+N}^{2}}},i=1,2,\ldots,N_{2}\\
			\left[\ensuremath{P_{\mathcal{B}}(}\boldsymbol{\beta})\right]_{i+N} & =\frac{{\beta}_{i+N}}{\sqrt{{\beta}_{i}^{2}+{\beta}_{i+N}^{2}}}, i=1,2,\ldots,N_{2}.
		\end{align}
	\end{subequations}
	\begin{proposition}\label{PropoGradientss}
		The complex gradients $ \nabla_{\thetv}	\mathrm{SE}(\thetv,\betv) $ and  $\nabla_{\betv}	\mathrm{SE}(\thetv,\betv) $ are obtained in closed-forms by
		% 	 and $ \betv^{*} $ are given by
		\begin{subequations}
			\begin{align}
				\nabla_{\thetv}	\mathrm{SE}(\thetv,\betv) &=[\nabla_{\thetv^{t}}	\mathrm{SE}(\thetv,\betv)^{\T}, \nabla_{\thetv^{r}}	\mathrm{SE}(\thetv,\betv)^{\T}]^{\T},\\
				\nabla_{\thetv^{t}}	\mathrm{SE}(\thetv,\betv)&=\frac{\tau_{\mathrm{c}}-\tau}{\tau_{\mathrm{c}}\log2}\sum_{k=1}^{K}\frac{	I_{k}\nabla_{\thetv^{t}}{S_{k}}-S_{k}	\nabla_{\thetv^{t}}{I_{k}}}{(1+\gamma_{k})I_{k}^{2}} ,\\
				\nabla_{\thetv^{r}}	\mathrm{SE}(\thetv,\betv)&=\frac{\tau_{\mathrm{c}}-\tau}{\tau_{\mathrm{c}}\log2}\sum_{k=1}^{K}\frac{	I_{k}\nabla_{\thetv^{r}}{S_{k}}-S_{k}	\nabla_{\thetv^{r}}{I_{k}}}{(1+\gamma_{k})I_{k}^{2}}, 
			\end{align}
		\end{subequations}
		where
		\begin{subequations}
			\begin{align}
				\nabla_{\thetv^{t}}S_{k}&=\begin{cases}
					\nu_{2k}\diag\bigl(\bB_{3t}\diag(\boldsymbol{{\beta}}^{t})\bigr)\\+\nu_{3k}\diag\bigl(\bB_{4t}\diag(\boldsymbol{{\beta}}^{t})\bigr) & w_{k}=t\\
					0 & w_{k}=r
				\end{cases}\label{derivtheta_t}\\
				\nabla_{\thetv^{r}}S_{k}&=\begin{cases}
					\nu_{2k}\diag\bigl(\bB_{3r}\diag(\boldsymbol{{\beta}}^{r})\bigr)\\
					+\nu_{3k}\diag\bigl(\bB_{4r}\diag(\boldsymbol{{\beta}}^{r})\bigr) & w_{k}=r\\
					0 & w_{k}=t
				\end{cases}\label{derivtheta_r}\\
				\nabla_{\thetv^{t}}I_k &=\diag\bigl(\tilde{\mathbf{A}}_{kt}\diag(\boldsymbol{{\beta}}^{t})\bigr)\label{derivtheta_t_Ik}\\
				\nabla_{\thetv^{r}}I_k &=\diag\bigl(\tilde{\mathbf{A}}_{kr}\diag(\boldsymbol{\beta}^{r})\bigr)\label{derivtheta_r_Ik}
			\end{align}
		\end{subequations}
		with 
		\begin{equation}
			\tilde{\mathbf{A}}_{kt}=\begin{cases}
				\bar{\nu}_{2k}\bB_{3}+\bar{\nu}_{2k}\bB_{4}+\sum\nolimits _{i\in\mathcal{K}_{t}}^{K}(\tilde{\nu}_{ki2}\bB_{3}+\tilde{\nu}_{ki2}\bB_{4}) & w_{k}=t\\
				\sum\nolimits _{i\in\mathcal{K}_{t}}(\tilde{\nu}_{ki2}\bB_{3}+\tilde{\nu}_{ki2}\bB_{4}) & w_{k}\neq t
			\end{cases}
		\end{equation}
		with  $ \bB_{3t}=\bR_{1}\bPhi_{2,t}\bR_{2} $, $ \bar{\bD} =\tr(\bB_{1}\bPhi_{1}^{\H})\bR_{t} $,  $ \bB_{4t}=\hat{\beta}_{2k}\tr(\bR_{t} \bPhi_{2,t}\bR_{2}) $ for $w_k\in\{t,r\}$, $ 	\nu_{2k}=2\tr(\bar{\boldsymbol{\Psi}}_{k})\tr\bigl(\bQ_{0k} \bR_{0k} 	\bar{\bD}-\bQ_{1k}\bR_{0k}^{2}\bQ_{1k}	\bar{\bD}+\bR_{0k}\bQ_{0k}	\bar{\bD}\bigr) $, $ \nu_{3k}=2\tr(\bar{\boldsymbol{\Psi}}_{k})\tr\bigl(\bQ_{2k}\bR_{2k}\bR_{2}-\bQ_{2k}\bR_{2k}^{2}\bQ_{2k}\bR_{2}+\bR_{2k}\bQ_{2k}\bR_{2}\bigr) $,
		$\bar{\nu}_{1k}=	\tr\bigl(\check{\bar{\boldsymbol{\Psi}}}_{k}\bar{\bD}\bigr)$,$\bar{\nu}_{2k}=\hat{\beta}_{1k}\tr\bigl(\check{\bar{\boldsymbol{\Psi}}}_{k}\bR_{2}\bigr)$, $\tilde{\nu}_{ki1}=\tr\bigl(\tilde{\mathbf{R}}_{ki}\bar{\bD}\bigr)$, and $\tilde{\nu}_{ki2}=\hat{\beta}_{1k}\tr\bigl(\tilde{\mathbf{R}}_{ki}\bR_{2}\bigr)$. In a similar way,
		the real-valued gradient $\nabla_{\betv}	\mathrm{SE}(\thetv,\betv) $ is given by
		\begin{subequations}\label{eq:deriv:wholebeta}
			\begin{align}
				\nabla_{\betv}	\mathrm{SE}(\thetv,\betv) &=[\nabla_{\betv^{t}}	\mathrm{SE}(\thetv,\betv)^{\T}, \nabla_{\betv^{r}}	\mathrm{SE}(\thetv,\betv)^{\T}]^{\T},\\
				\nabla_{\betv^{t}}	\mathrm{SE}(\thetv,\betv)&=\frac{\tau_{\mathrm{c}}-\tau}{\tau_{\mathrm{c}}\log2}\sum_{k=1}^{K}\frac{	I_{k}\nabla_{\betv^{t}}{S_{k}}-S_{k}	\nabla_{\betv^{t}}{I_{k}}}{(1+\gamma_{k})I_{k}^{2}}, \label{gradbetat:final}\\
				\nabla_{\betv^{r}}	\mathrm{SE}(\thetv,\betv)&=\frac{\tau_{\mathrm{c}}-\tau}{\tau_{\mathrm{c}}\log2}\sum_{k=1}^{K}\frac{	I_{k}\nabla_{\betv^{r}}{S_{k}}-S_{k}	\nabla_{\betv^{r}}{I_{k}}}{(1+\gamma_{k})I_{k}^{2}} ,
			\end{align}
		\end{subequations} 
		where
		\begin{subequations}
			\begin{align}
				\nabla_{\betv^{t}}S_{k}&=\begin{cases}
					2\real\{ \bar{\bD}\bigl(\diag\bigl(\bB_{3t}\herm\diag(\boldsymbol{{\beta}}^{t})\bigr)\bigr)\}\\+2\real\{\bR_{2}\bigl(\diag\bigl(\bB_{4t}\herm\diag(\boldsymbol{{\beta}}^{t})\bigr)\bigr)\} & w_{k}=t\\
					0 & w_{k}=r
				\end{cases}\label{derivbeta_t}\\
				\nabla_{\betv^{r}}S_{k}&=\begin{cases}
					2\real\{ \bar{\bD}\bigl(\diag\bigl(\bB_{3r}\herm\diag(\boldsymbol{{\beta}}^{r})\bigr)\bigr)\}\\+2\real\{\bR_{2}\bigl(\diag\bigl(\bB_{4r}\herm\diag(\boldsymbol{{\beta}}^{r})\bigr)\bigr)\} & w_{k}=r\\
					0 & w_{k}=t,
				\end{cases}\label{derivbeta_r}
			\end{align}
			\begin{align}
				\nabla_{\thetv^{t}}I_k
				&=2\real\{\diag\bigl(\tilde{\mathbf{A}}_{kt}\diag(\boldsymbol{\mathbf{\beta}}^{t})\bigr)\}\\
				\nabla_{\thetv^{r}}I_k
				&=2\real\{\diag\bigl(\tilde{\mathbf{A}}_{kr}\diag(\boldsymbol{\mathbf{\beta}}^{r})\bigr)\}.
			\end{align}
		\end{subequations}

	\end{proposition}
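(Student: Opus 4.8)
The plan is to obtain the gradients by repeatedly applying the chain rule, pushing the derivative inward through the four nested layers of \eqref{LowerBound}: the logarithm, the SINR quotient $\gamma_k=S_k/I_k$, the large-scale quantities $S_k\asymp\tr^2(\bar{\bPsi}_k)$ and $I_k$ of Proposition~\ref{Proposition:DLSINR}, and finally the covariance matrices. The outer two layers are immediate: for any real or complex optimization variable $x\in\{\thetv^t,\thetv^r,\betv^t,\betv^r\}$ one has $\nabla_x\log_2(1+\gamma_k)=\tfrac{1}{\log 2}\,(I_k\nabla_x S_k-S_k\nabla_x I_k)/((1+\gamma_k)I_k^2)$, which already reproduces the common prefactor in the statement; it then remains to compute $\nabla_x S_k$ and $\nabla_x I_k$.

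The structural fact that makes this tractable --- exactly as for the RIS-1 gradient of Proposition~\ref{PropGradients} --- is that the STAR-RIS variables enter $S_k$ and $I_k$ only through the covariances $\bR_{0k}$ in \eqref{cov3} (the double-reflection cascade) and $\bR_{2k}$ in \eqref{cor2} (the BS--STAR-RIS--UE single-reflection link), whereas $\bR_{1k}$ in \eqref{cor1} does not depend on them. Moreover, inside $\bR_{0k}$ and $\bR_{2k}$ the PB $\bPhi_{2,w_k}=\diag(\betv^{w_k})\diag(\thetv^{w_k})$ appears only through the two scalar quadratic forms $a_k=\tr(\bR_1\bPhi_{2,w_k}\bR_2\bPhi_{2,w_k}^{\H})$ and $b_k=\tr(\bR_t\bPhi_{2,w_k}\bR_2\bPhi_{2,w_k}^{\H})$, with all remaining factors (in particular $\bR_t$, the frozen RIS-1 factor $\tr(\bR_1\bPhi_1\bR_2\bPhi_1^{\H})$ absorbed into $\bar{\bD}=\tr(\bB_1\bPhi_1^{\H})\bR_t$, and the $\hat{\beta}$ path-losses) held fixed. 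I would therefore first compute the \emph{scalar} partials of $S_k$ and $I_k$ with respect to such a scalar factor, using the differential identity $d\bQ_{jk}=-\bQ_{jk}\,(d\bR_{jk})\,\bQ_{jk}$ for the matrix inverse, the product rule on $\bPsi_{jk}=\bR_{jk}\bQ_{jk}\bR_{jk}$, and linearity of the trace on $S_k=\tr^2(\bar{\bPsi}_k)$, on $-\tr(\bar{\bPsi}_k^2)$, and on the sums $\sum_i\tr(\bar{\bR}_k\bar{\bPsi}_i)$ and $\sum_i\tr(\bar{\bPsi}_i)$. Collecting the resulting traces yields precisely the scalar coefficients $\nu_{2k}$ (from the $\bR_{0k}$-dependence of $S_k$), $\nu_{3k}$ (from the $\bR_{2k}$-dependence of $S_k$), and $\bar{\nu}_{1k},\bar{\nu}_{2k},\tilde{\nu}_{ki1},\tilde{\nu}_{ki2}$ for $I_k$, with $\check{\bar{\bPsi}}_k$ and $\tilde{\bR}_{ki}$ defined as in Appendix~\ref{Prop2}.

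The remaining ``outer'' step is to differentiate $a_k$ and $b_k$ with respect to $\thetv^u$ and $\betv^u$. Writing $a_k=(\betv^{u}\odot\thetv^{u})^{\H}(\bR_1\odot\bR_2^{\T})(\betv^{u}\odot\thetv^{u})$ and using the Hermitian symmetry of $\bR_1\odot\bR_2^{\T}$, the conjugate (Wirtinger) derivative is $\partial a_k/\partial(\thetv^{u})^{\ast}=\diag(\bR_1\bPhi_{2,u}\bR_2\diag(\betv^{u}))=\diag(\bB_{3u}\diag(\betv^{u}))$, while the real gradient in the amplitude is $\partial a_k/\partial\betv^{u}=2\Re\{\cdot\}$ of the analogous product; the $\bR_t$-weighted form $b_k$ yields $\bB_{4u}$ in the same way. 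Substituting these into the chain rule, multiplying by the scalar coefficients above, and summing over $k$ produces the claimed closed forms for $\nabla_{\thetv^u}S_k$, $\nabla_{\betv^u}S_k$, $\nabla_{\thetv^u}I_k$, and $\nabla_{\betv^u}I_k$; the $\diag(\betv^{u})$ factors and the $2\Re\{\cdot\}$ wrappers are exactly the fingerprints of this last step, and the projections $P_{\Theta}$, $P_{\mathcal{B}}$ enforce the feasibility constraints afterwards.

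I expect the one genuinely delicate point --- the main obstacle --- to be the bookkeeping in $\nabla I_k$ across the two regions. The term $I_k$ contains $\sum_{i=1}^{K}\tr(\bar{\bR}_k\bar{\bPsi}_i)$, the piece $-\tr(\bar{\bPsi}_k^2)$, and the noise sum $\sum_i\tr(\bar{\bPsi}_i)$; differentiating with respect to $\thetv^t$ or $\betv^t$, only the summands with $w_i=t$ survive, and the $\bar{\bR}_k$-- and $\bar{\bPsi}_k$--dependent pieces survive only when $w_k=t$ as well. This is precisely what forces the case split in $\nabla_{\thetv^u}S_k$ (zero unless $w_k=u$) and the two-branch definition of $\tilde{\bA}_{ku}$. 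Keeping the region label, the user index, and the conjugate/transpose placements consistent along the whole nested chain $\bR\to\bQ\to\bPsi\to\bar{\bPsi}\to(S_k,I_k)\to\gamma_k\to\mathrm{SE}$ is the only real difficulty; each individual derivative is elementary matrix calculus.
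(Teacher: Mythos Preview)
Your proposal is correct and follows essentially the same approach as the paper's proof: chain rule through $\mathrm{SE}\to\gamma_k\to(S_k,I_k)\to\bar{\bPsi}_k\to(\bR_{0k},\bR_{2k})$, the inverse differential $d\bQ_{jk}=-\bQ_{jk}(d\bR_{jk})\bQ_{jk}$, the observation that only $\bR_{0k}$ and $\bR_{2k}$ depend on $\bPhi_{2,w_k}$, and the region-based case split. The only cosmetic difference is that you factor through the scalar intermediates $a_k,b_k$ and a Hadamard-product representation of the quadratic trace, whereas the paper works directly with matrix differentials $d(\bR_{0k}),d(\bR_{2k})$ expressed via $\diag(\cdot)$; the computations and the resulting coefficients $\nu_{2k},\nu_{3k},\bar{\nu}_{jk},\tilde{\nu}_{kij}$ coincide.
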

	\begin{proof}
		Please see Appendix~\ref{prop3}.	
	\end{proof}
	
	As mentioned, under independent Rayleigh fading conditions, which are unrealistic in practice, Algorithm \ref{Algoa1} would not execute  because no dependence on the phase shifts appears. Similarly, Algorithm  \ref{Algoa2} would execute only with respect to the amplitudes of STAR-RIS.
		\begin{remark}
	We have optimized the amplitude and phase shift,  separately, rather than optimizing them as a single complex although the presentation of the proposed method would be more elegant if we had chosen one variable for both of them. However,  despite that extensive numerical experiments revealed that both ways give the same performance in many cases, in some cases, the use of two separate variables yields a better performance. Hence, this numerical observation has lead to the separate independent optimization of the  amplitudes and phase shifts.
	\end{remark}
	\subsection{Complexity Analysis of Algorithms \ref{Algoa1} and \ref{Algoa2}}	
	Herein, we provide the complexity analysis for each iteration of Algorithm \ref{Algoa1}. Regarding $ \bR_{0k} $, $ \bR_{1k} $, $ \bR_{2k} $, the traces require  $\mathcal{O}(N^2)$ complex multiplications because $ \bPhi_{1} $ is diagonal, while their overall expression requires in total  $\mathcal{O}(M^3+N^2)$ complex multiplications since  the trace is multiplied with a $ M\times M $ matrix. Also, $ \bar{\bPsi}_{k} $, including a matrix inversion, takes 
	$\mathcal{O}(M^3)$  complex multiplications, which could be reduced by applying eigenvalue decomposition (EVD). Thus, $ \mathrm{SE} $ requires $\mathcal{O}(K(N^2+M^3))$ iterations. The same number of iterations is required by $ \nabla_{\bar{\thetv}}\mathrm{SE}(\bar{\thetv}) $. The complexity of Algorithm \ref{Algoa2} can be shown that is the same by following the same steps.
	
	\subsubsection*{Convergence Analysis of Algorithm 2}
		We follow  standard arguments for projected gradient methods to show the convergence of Algorithm 2. First, the gradients $\nabla_{\thetv}f(\thetv,\betv)$ and $\nabla_{\betv}f(\thetv,\betv)$ are Lipschitz continuous\footnote{ A function $\bh(\bx)   $ is  Lipschitz continuous over the set $D$ if there exists $L>0$ such that $||\bh(\bx)-\bh(\by)  ||\leq L||\bx-\by||_2$} over the feasible set since they comprise basic functions. If we denote $L_{\thetv }$ and $L_{\betv}$  the Lipschitz constant of $\nabla_{\thetv}f(\thetv,\betv)$ and $\nabla_{\betv}f(\thetv,\betv)$, respectively, we have that \cite[Chapter 2]{Bertsekas1999}
		\begin{align}
			f(\bx,\by) &\geq f(\thetv,\betv)
			+\langle	\nabla_{\thetv}f(\thetv,\betv),\bx-\thetv\rangle-\frac{1}{L_{\thetv }}\|\bx-\thetv\|^{2}_{2}
			\\\nn
			&\quad\quad+\langle\nabla_{\betv}f(\thetv,\betv),\by-\betv\rangle-\frac{1}{L_{\betv}}\|\by-\betv\|^{2}_{2}\\\nn
			&\geq f(\thetv,\betv)
			+\langle	\nabla_{\thetv}f(\thetv,\betv),\bx-\thetv\rangle-\frac{1}{L_{\max }}\|\bx-\thetv\|^{2}_{2}
			\\ \nn&\quad\quad+\langle\nabla_{\betv}f(\thetv,\betv),\by-\betv\rangle-\frac{1}{L_{\max}}\|\by-\betv\|^{2}_{2}		
		\end{align}
		where $L_{\max}=\max(L_{\thetv },L_{\betv})$. 
		Hence, the termination of the line search procedure of Algorithm 2  is achieved in finite iterations because the condition in Step \ref{ls:end} must be satisfied when  $\mu_n <L_{\max}$. 
		In particular, given $\mu_{n-1}$, the maximum number of steps in the line search procedure is $\left\lceil \frac{\log(L_{\max}\mu_{n-1})}{\log\kappa}\right\rceil $, where $\log()$ denotes the natural logarithm and  		$\left\lceil \cdot \right\rceil$  		denotes the smallest integer that is larger than or equal to the argument. 
		Moreover, because of the line search, we automatically obtain  an increasing sequence of objectives, i.e., $f(\thetv^{n+1},\betv^{n+1})\geq f(\thetv^{n},\betv^{n})$. Given that the feasible sets $\Theta$  and $\mathcal{B}$ are compact, $f(\thetv^{n},\betv^{n})$ must converge.  Notably,  $L_{\thetv }$ and $L_{\betv}$ are not required to run Algorithm 2.
	\section{Numerical Results}\label{numerical11}
	In this section, we discuss the numerical results corresponding to the downlink sum SE of double RIS-assisted mMIMO systems under correlated Rayleigh fading conditions. MC simulations corroborate our analysis even for finite (conventional) system dimensions, which agrees with a similar observation in~\cite{Couillet2011,Hoydis2013,Papazafeiropoulos2015a}.
	
	The simulation setup consists of a conventional RIS and a STAR that facilitate the communication between an mMIMO BS, which serves $ K = 4 $ UEs. Both surfaces are formed by a UPA of $ N=64 $ elements with each having dimensions $ d_{\mathrm{H}}\!=\!d_{\mathrm{V}}\!=\!\lambda/4 $, while the BS is formed by a uniform linear array (ULA) of $ M =64$ antennas. The  3D spatial locations of the network nodes being the  BS, conventional RIS, and STAR-RIS are given as $(x_B,~ y_B,~ z_B) = (0,~0,~0)$, $(x_R,~ y_R,~ z_R)=(50,~ 10, 20)$, and $(x_{SR},~ y_{SR},~z_{SR})=(100,~ 30,  20)$ respectively, all in meter units. Regarding the UEs in the $r$ region, they are located on a straight line between $(x_{SR}-\frac{1}{2}d_0,~y_{SR}-\frac{1}{2}d_0)$ and $(x_{SR}+\frac{1}{2}d_o,~y_{SR}-\frac{1}{2}d_0)$ with equal distances between each two adjacent users, and $d_0 = 20$~m in our simulations. UEs in the $t$ region are located between $(x_{SR}-\frac{1}{2}d_0,~y_{SR}+\frac{1}{2}d_0)$ and $(x_{SR}+\frac{1}{2}d_o,~y_{SR}+\frac{1}{2}d_0)$. In this work, we consider distance-based path-loss, where the channel gain of a given link $j$ is $\tilde \beta_j = A d_j^{-\alpha_{j}}$ with the channel gain of a given link $j$ being $\tilde \beta_j = A d_j^{-\alpha_{j}}$. The variables $A$ and  $\alpha_{j}$ denote the area of each reflecting element at the RIS and the path-loss exponent, respectively.  The path losses $ \tilde \beta_j $, $ \tilde{ \beta}_{g} $,  and $ \bar{ \beta}_{k} $ are assumed to have the same values but the latter is assumed to have a further penetration loss equal to $ 15 $ dB. The correlation matrices $ \bR_{\mathrm{BS}}$ and $\bR_{\mathrm{RIS}} $ are  evaluated based on \cite{Hoydis2013} and \cite{Bjoernson2020}, respectively. We assume that both surfaces have the same correlation. The variance of the noise is $ \sigma^2=-174+10\log_{10}B_{\mathrm{c}} $, where $B_{\mathrm{c}}=200~\mathrm{kHz}$ is the bandwidth. The coherence time is $ T_{c} = 1~\mathrm{ms} $, i.e., each coherence block consists of $ \tau_{c} = 200 $ samples, and we assume that the duration of the channel estimation phase is $ \tau = 20 $ samples.

	As  baseline schemes, we consider the conventional double-RIS, where the STAR-RIS is replaced by a conventional surface split into two subsurfaces that  consist of transmitting-only and reflecting-only elements, each with $ N_{2t} $ and $ N_{2r} $ elements, such that $ N_{2t}+N_{2r} =N_{2}$.  Also, we consider the single-RIS counterparts, which consist of a single STAR-RIS with $ N=N_{1}+N_{2} $ elements, or a single conventional RIS with the same number of elements.

	Fig. \ref{fig2} depicts the achievable sum SE versus the total number of  RIS elements $ N=N_{1}+N_{2} $ by studying the  effect of spatial correlation while changing the size of each RIS element. Obviously, the downlink sum SE increases with $ N $ as expected. Regarding the effect of spatial correlation, it is shown that the performance increases as  the correlation decreases by increasing the size  of the RIS elements. In addition, we have depicted the scenario of random phase shifts, i.e., no optimization has taken place on any of the surfaces. Hence, the performance is lower than in the case of optimized RISs. Also, we have provided the performance of no direct signal, which shows that the surfaces contribute to the sum-rate. In the same figure, we have depicted the impact of imperfect CSI by varying $ \tau $, which expresses the duration  of the channel estimation phase. In particular, the case $ \tau =0$ corresponds to the perfect CSI scenario, which presents the best performance. The case, where $ \tau =20$ samples presents lower performance, while by increasing $ \tau $ to $40  $ samples, the performance worsens even more.
	
	\begin{figure}[!h]
		\begin{center}
			\includegraphics[width=0.8\linewidth]{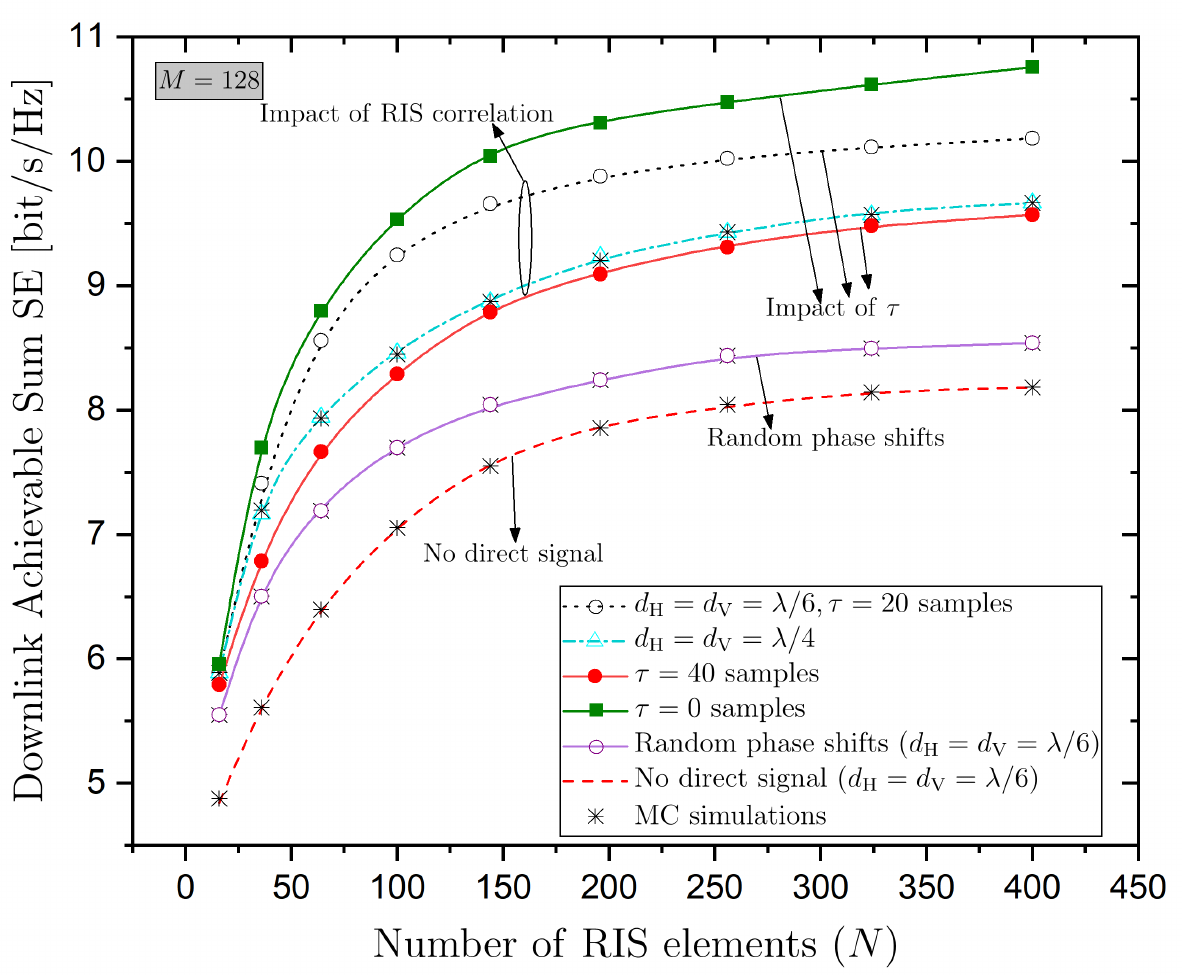}
			\caption{{Downlink achievable sum SE versus the total number of RIS elements antennas $N$ of a RIS/STAR-RIS assisted mMIMO system with imperfect CSI ($ M=100 $, $ K=4 $) for varying conditions (Analytical results and MC simulations). }}
			\label{fig2}
		\end{center}
	\end{figure}

	Fig. \ref{fig21} demonstrates the superiority of the proposed combination of RIS with a STAR-RIS in terms of the achievable sum SE versus the total number of  RIS elements $ N$. Specifically, we observe that the RIS/STAR-RIS outperforms the conventional double-RIS design including the double and single reflected links. The reason is the introduction of the advantageous STAR-RIS, which exploits more degrees of freedom with respect to a conventional reflecting-only RIS. In comparison to single-RIS counterparts, the double-RIS design enjoys the cooperative PB gain, and additionally, to this, the proposed model outperforms the conventional single RIS due to the presence of more adjustable (reflection and transmission) parameters.
	
	\begin{figure}[!h]
		\begin{center}
			\includegraphics[width=0.8\linewidth]{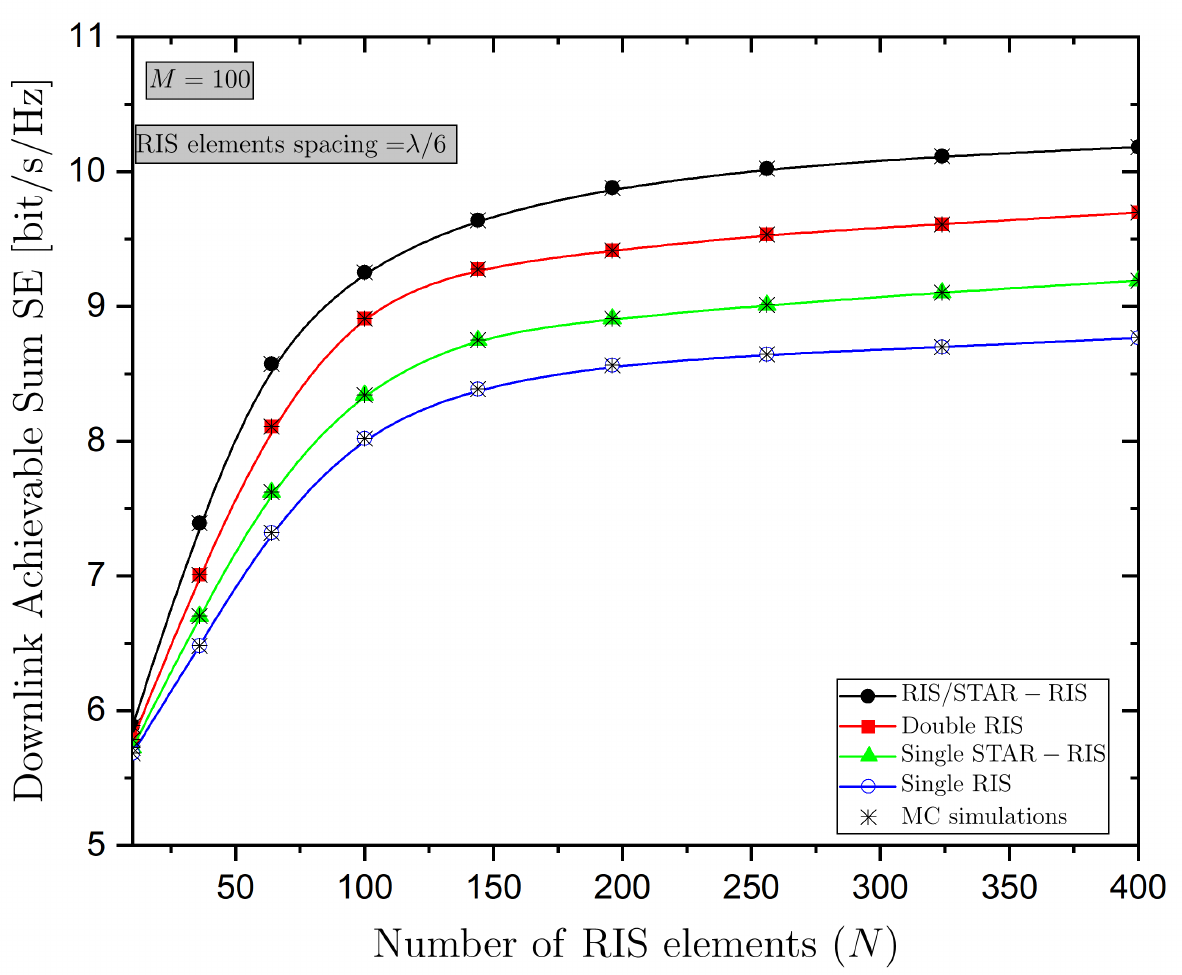}
			\caption{{Downlink achievable sum SE versus the total number of RIS elements antennas $N$ of a RIS/STAR-RIS assisted mMIMO system with imperfect CSI ($ M=100 $, $ K=4 $) for different architectures (Analytical results and MC simulations). }}
			\label{fig21}
		\end{center}
	\end{figure}
	
	Figs. \ref{fig3} illustrates the achievable sum SE versus the number of BS antennas $ M $ for different architectures being the RIS/STAR-RIS, double-RIS, single STAR-RIS, and single RIS. Apart from the fact that  the sum SE  increases with $ M $ in all cases. In particular, we observe that when two RIS are deployed but the second surface is a STAR-RIS, the sum SE is higher. In the case of a single RIS, despite the type of the surface, which can be a conventional RIS or a STAR-RIS, the performance is lower than having two RISs as expected because this scenario does not enjoy the double-RIS PB gain. In all cases, which include a STAR-RIS, the performance is better than the reflective only counterpart since both the transmission and reflection coefficients of each element can be optimized.
	
	\begin{figure}%
		\centering
		\includegraphics[width=0.8\linewidth]{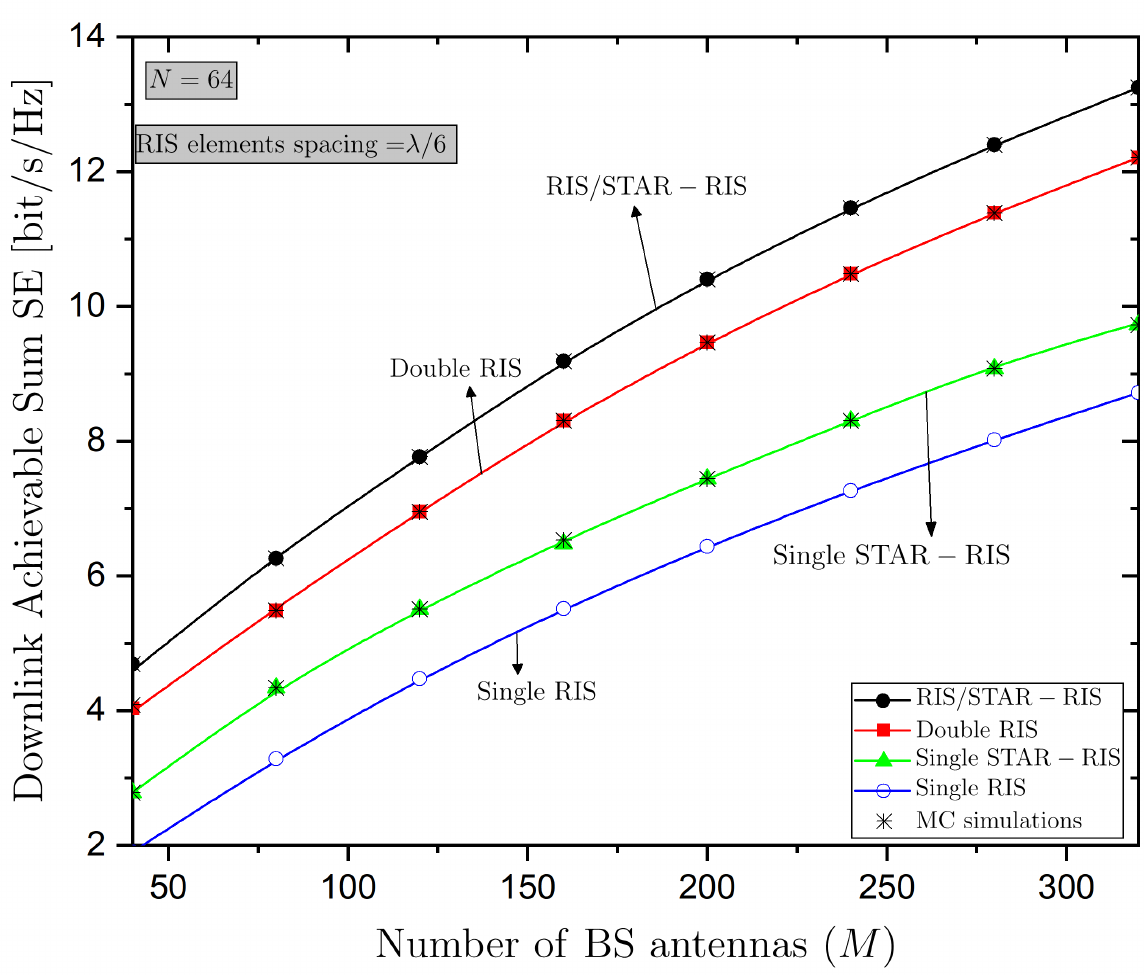}
		\caption{ Downlink achievable sum SE versus the number of BS antennas $M$ of a RIS/STAR-RIS assisted mMIMO system with imperfect CSI for $N=64 $, $ K=4 $ under varying conditions (Analytical results). }
		\label{fig3}
	\end{figure}

	Fig. \ref{fig4} shows  the achievable sum SE versus the SNR for different layouts as in Fig. \ref{fig21}. The RIS/STAR-RIS implementation presents the best performance by exploiting its STAR-RIS part, which brings a higher degree of flexibility by optimizing both the transmission and reflection coefficients of each element. On the contrary, in the case of conventional RIS, only one type of the coefficients can be optimized each time, i.e., the transmission or the reflection coefficient. Also, the RIS/STAR-RIS and double RIS layouts, which include two surfaces, benefit from the  double-RIS PB gain. 
	
	\begin{figure}[!h]
		\begin{center}
			\includegraphics[width=0.8\linewidth]{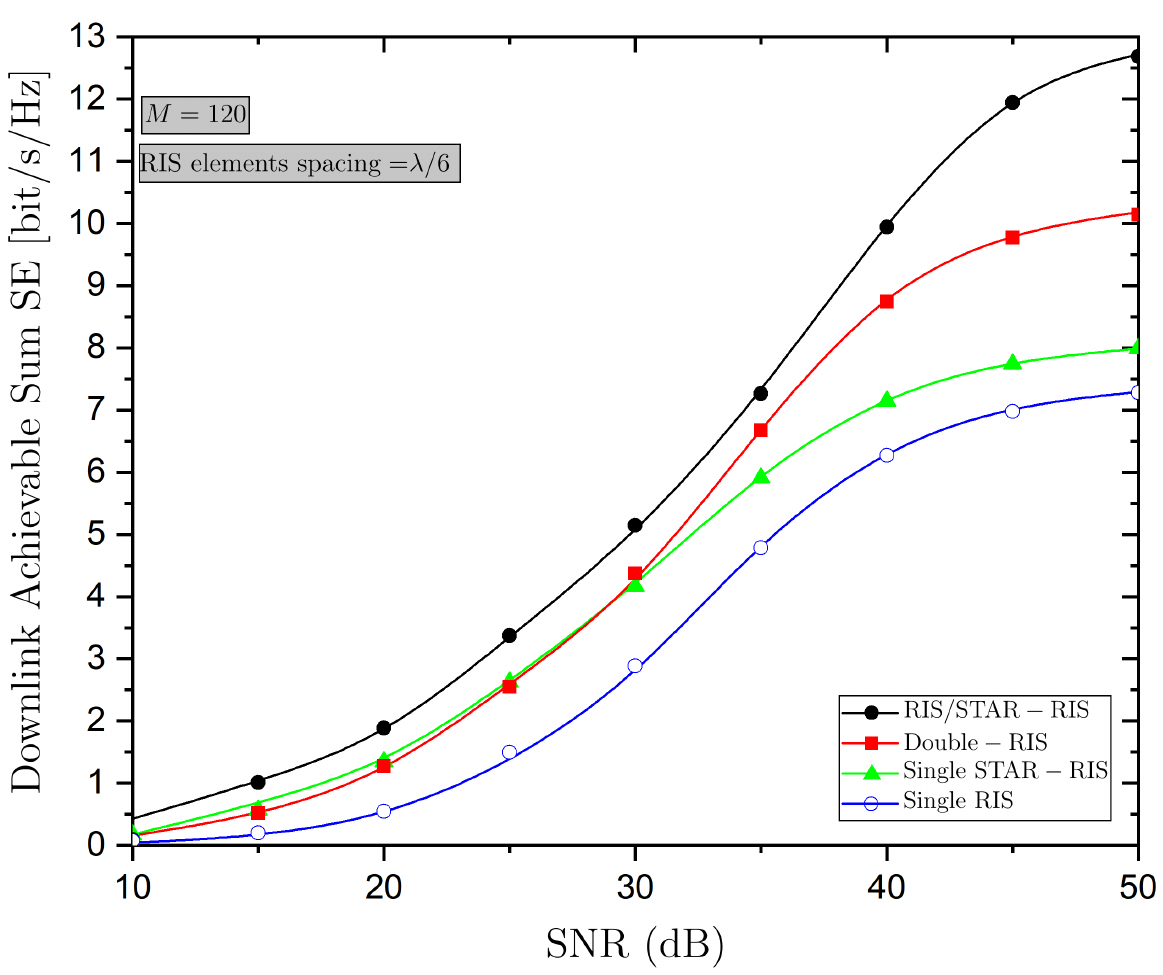}
			\caption{{Downlink achievable sum SE versus the SNR of a RIS/STAR-RIS assisted mMIMO system with imperfect CSI ($M=100$, $ N=64 $, $ K=4 $) for varying conditions (Analytical results). }}
			\label{fig4}
		\end{center} 
	\end{figure}

		Fig. \ref{fig6} illustrates  the achievable sum SE versus the number of elements $ N_{1} $ of surface 1, i.e., the conventional RIS given the total number of elements $ N=N_{1}+N_{2}=320$. For the sake of reference, we have depicted the scenarios of double-RIS, single RIS, and STAR/RIS. In other words, we have shown the scenarios with 2 surfaces and their single-surface counterparts. The former always achieve better rate performance compared to the single RIS baselines. Also, the the RIS/STAR-RIS model performs better than the reflective-only double-RIS model. Moreover, we observe that the rate is maximized when the 2 surfaces have almost equal number of elements.
	
		\begin{figure}[!h]
		\begin{center}
			\includegraphics[width=0.8\linewidth]{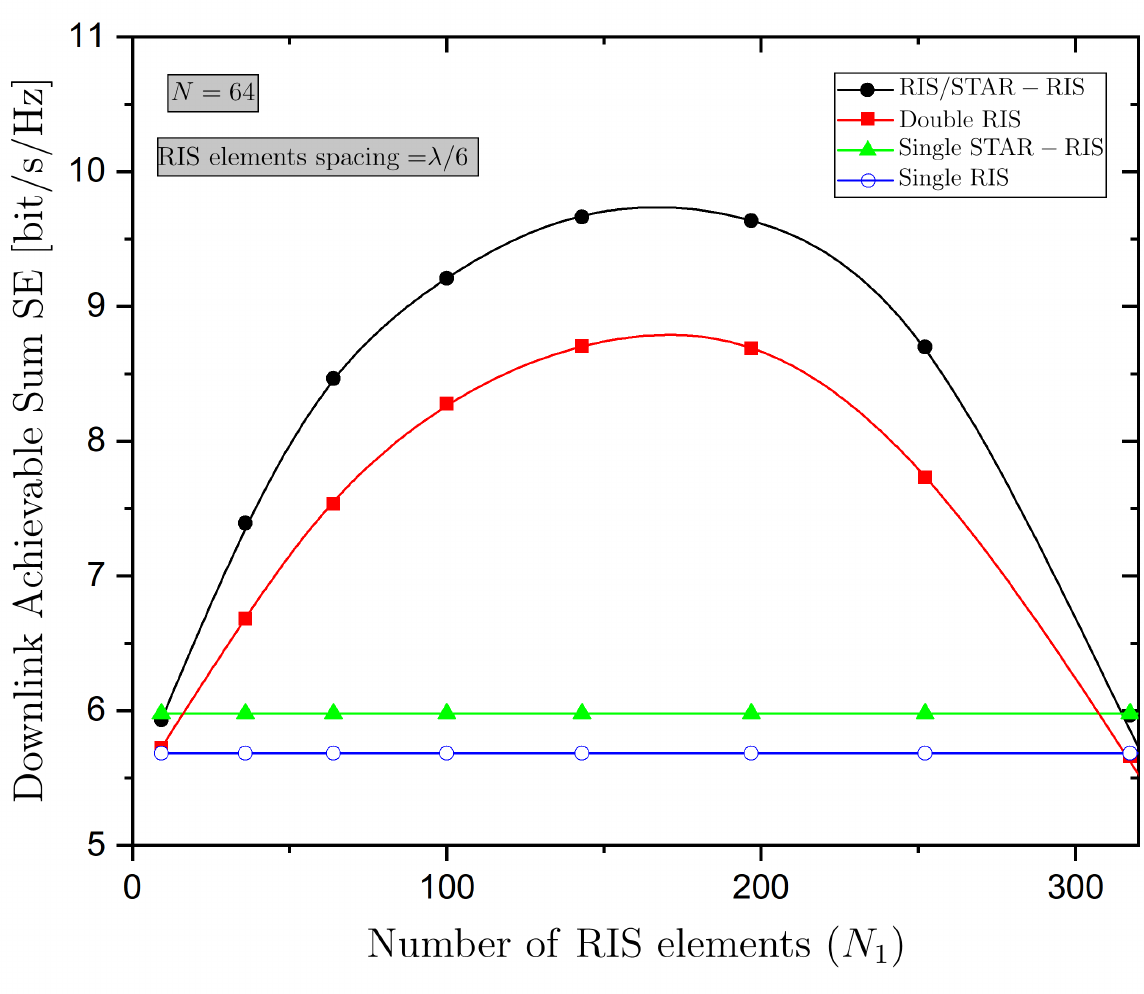}
			\caption{{Downlink achievable sum SE versus the SNR of a RIS/STAR-RIS assisted mMIMO system with imperfect CSI ($M=100$, $ N=64 $, $ K=4 $) for varying conditions (Analytical results). }}
			\label{fig6}
		\end{center} 
	\end{figure}
	
	Fig. \ref{fig5} elaborates on the convergence of the proposed projected gradient algorithm for the STAR-RIS. Given that \ref{Algoa1} and \ref{Algoa2} have similar structures but  \ref{Algoa1} is simpler, similar observations concern it.  In particular, in the case of  \ref{Algoa2}, we depict the  achievable sum SE against the iteration count returned  for 5 different randomly generated initial points. By assuming equal power splitting between transmission and reception mode, the initial values for $\betv_{r}$ and $\betv_{t}$ are $\sqrt{0.5}$. The initial values for $\thetv_{r}$ and  $\thetv_{r}$ are drawn from the Uniform distribution over $[0,2\pi]$. The algorithm terminates when the  increase of the objective between the two last iterations is less than $10^{-5}$ or the number of iterations is larger than $200$. Since the problem is nonconvex, Algorithm \ref{Algoa2} provides a not necessarily optimal solution. Hence, Algorithm \ref{Algoa2} may start from different initial points and lead to different points with different convergence rates. To address this sensitivity issue, we run Algorithm \ref{Algoa2} from different initial points and select the best convergent solutions, simulations have shown that to achieve a good trade-off between complexity and  achievable sum SE, a good option option is to run  \ref{Algoa2} from 5 randomly generated initial points

	\begin{figure}[!h]
		\begin{center}
			\includegraphics[width=0.8\linewidth]{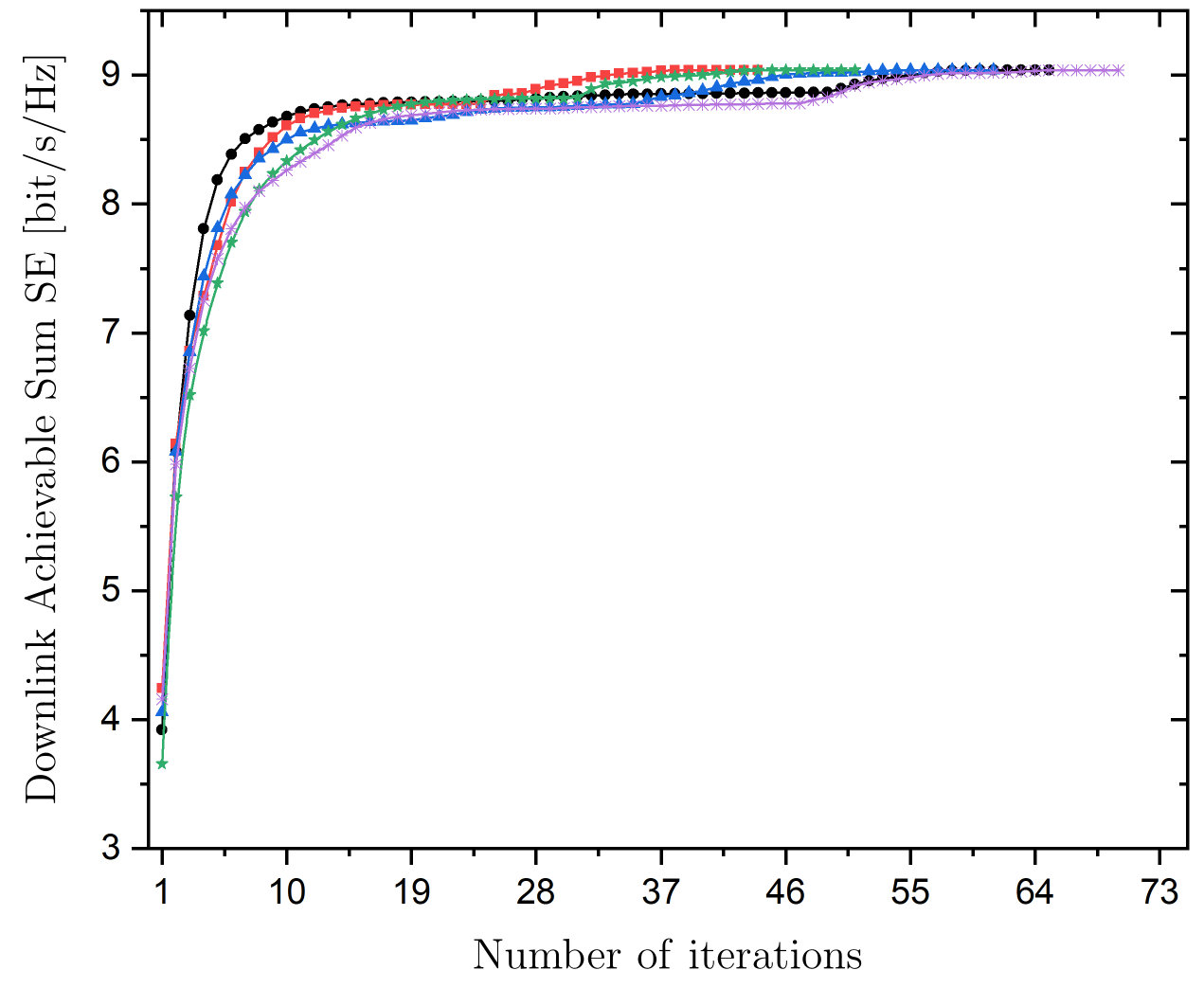}
			\caption{{Convergence of Algorithm \ref{Algoa2} for a RIS/STAR-RIS assisted mMIMO system with imperfect CSI ($M=100$, $ N=64 $, $ K=4 $). }}
			\label{fig5}
		\end{center} 
	\end{figure}

	\section{Conclusion} \label{Conclusion} 
	In this paper, we proposed a RIS/STAR-RIS assisted mMIMO communication system by exploiting the cooperative beamforming  and full coverage under correlated Rayleigh fading conditions and the coexistence of both double and single links. Moreover, we obtained the estimated channels and derived the DE of the sum SE in closed form in terms of large-scale statistics, which induce low overhead. Next, we formulated and solved the cooperative beamforming optimization problem to maximise the sum SE. The closed-form gradients came with low complexity and reduced overhead which is crucial, especially, in STAR-RIS architectures that include double parameters. One of the two main reasons for reduced overhead during optimization is that it can be performed at every several coherence intervals. The other reason is that we optimized the amplitudes and phase shifts of the STAR-RIS simultaneously. Simulation results showed substantial performance improvement compared to conventional double RIS or single RIS counterparts. As a future work, the study of more general multi-RIS architectures  with more than two hops for reflection and  transmission could be investigated to reach the full potential of a smart wireless environment.

	\begin{appendices}
		\section{Proof of Lemma~\ref{PropositionDirectChannel}}\label{lem1}
		The LMMSE estimator of $ \bh_{k} $ results from the minimization of $ \tr\!\big(\EE\big\{\!(\hat{\bh}_{k}-{\bh}_{k})(\hat{\bh}_{k}-{\bh}_{k})^{\H}\!\big\}\!\big) $, which gives
		\begin{align}
			\hat{\bh}_{k} =\EE\!\left\{\br_{k}\bh_{k}^{\H}\right\}\left(\EE\!\left\{\br_{k}\br_{k}^{\H}\right\}\right)^{-1}\br_{k}.\label{Cor6}
		\end{align}
		
		The first expectation is obtained by exploiting  that the channel and the receiver noise are uncorrelated. In particular, we have
		\begin{align}
			\EE\left\{\br_{k}\bh_{k}^{\H}\right\}
			&=\EE\left\{\bh_{k}\bh_{k}^{\H}\right\}=\bR_{0k}.\label{Cor0}
		\end{align}
		
		The second term in \eqref{Cor6} is obtained as
		\begin{align}
			\EE\left\{\br_{k}\br_{k}^{\H}\right\}&=\bR_{0k} +\frac{\sigma^2}{ \tau P }\Id_{M}.\label{Cor1}
		\end{align}
		The LMMSE estimate in \eqref{estim1} is obtained by substituting \eqref{Cor0} and \eqref{Cor1} into \eqref{Cor6}. Also, the covariance matrix of the estimated channel is obtained as
		\begin{align}
			\EE\left\{\hat{\bh}_{k}	\hat{\bh}_{k}^{\H}\right\}=\bR_{0k}\bQ_{0k}\bR_{0k}.\label{var1}
		\end{align}

		\section{Proof of Proposition~\ref{Proposition:DLSINR}}\label{Proposition1}
		
		The DE of  $ S_k $ in \eqref{Sig} can be written as 
		\begin{align}
			S_{k}&=|\EE\{\bar{\bh}_{k}^{\H}\hat{\bar{\bh}}_{k}\}|^{2}\asymp|\EE\{\hat{\bar{\bh}}_{k}^{\H}\hat{\bar{\bh}}_{k}\}|^{2}\label{term1} \\
			&=|\tr\left(\bar{\bPsi}_{k}\right)\!|^{2}\label{term2},
		\end{align}
		where, in \eqref{term1}, we have taken into account the independence between the  channel and its estimated version. The last equation is obtained based on \cite[Lem. 4]{Papazafeiropoulos2015a}.

		Regarding the first term of $ I_k $ in \eqref{Int}, we have
		\begin{align}
			&\!\!\EE\big\{ \big| \bar{\bh}_{k}^{\H}\hat{\bar{\bh}}_{k}-\EE\big\{
			\bar{\bh}_{k}^{\H}\hat{\bar{\bh}}_{k}\big\}\big|^{2}\big\}\!=\!
			\EE\big\{ \big| \bar{\bh}_{k}^{\H}\hat{\bar{\bh}}_{k}\big|^{2}\big\}\!-\!\big|\EE\big\{
			\bar{{\bh}}_{k}^{\H}\hat{\bar{\bh}}_{k}\big\}\big|^{2} \label{est2}\\
			&\asymp\EE\big\{ \big| \hat{\bar{\bh}}_{k}^{\H} \hat{\bar{\bh}}_{k} +|\bar{\bee}_{k}^{\H}\hat{\bar{\bh}}_{k}\big|^{2}\big\}-\big|\EE\big\{
			\hat{\bar{\bh}}_{k}^{\H}\hat{\bar{\bh}}_{k}\big\}\big|^{2}\label{est3} \\
			&=\EE\big\{|\bar{\bee}_{k}^{\H}\hat{\bar{\bh}}_{k}|^{2}\big\} \label{est5}\\
			&\asymp\tr\!\left( \bar{\bR}_{k}\bar{\bPsi}_{k}\right)-\tr\left(\bar{ \bPsi}_{k}^{2}\right),\label{est4}
		\end{align}
		where in~\eqref{est3}, we have used \eqref{estimatedchannel}, and that $\EE\left\{ |X+Y|^{2}\right\} =\EE\left\{ |X|^{2}\right\} +\EE\left\{ |Y^{2}|\right\}$, which is valid for any two uncorrelated random variables while one of them has zero mean value. The last equation is derived by taking into account the uncorrelation between the two random vectors and by application of \cite[Lem. 4]{Papazafeiropoulos2015a}.
		
		The second term of $ I_k $ in \eqref{Int} is obtained as
		\begin{align}
			&\EE\big\{ \big| \bar{\bh}_{k}^{\H}\hat{\bar{\bh}}_{i}\big|^{2}\big\}\asymp\tr\!\left(\bar{\bR}_{k}\bar{\bPsi}_{i} \right)\label{54}
		\end{align}
		due to the   uncorrelation between $ {\bh}_{k} $ and $ \hat{\bh}_{i} $ and \cite[Lem. 4]{Papazafeiropoulos2015a}.  The normalization parameter is obtained  as
		\begin{align}
			\!\!\!	\lambda=\frac{1}{\sum_{i=1}^{K}\!\EE\{\mathbf{f}_{i}^{\H}\mathbf{f}_{i}\}}=\frac{1}{\sum_{i=1}^{K}\!\mathbb{E}\{\hat{\bar{\mathbf{h}}}_{i}^{\H}\hat{\bar{\mathbf{h}}}_{i}\}}\asymp\frac{1}{\sum_{i=1}^{K}\!\tr(\bar{\boldsymbol{\Psi}}_{i})}.\label{normalization}
		\end{align}
		The proof concludes by inserting \eqref{term2},  \eqref{est4},  \eqref{54}, and \eqref{normalization} into \eqref{Sig} and \eqref{Int}.
		
		\section{Proof of Proposition~\ref{PropGradients}}\label{Prop2}
		Based on \eqref{LowerBound}, we can easily obtain
		\begin{align}
			\nabla_{\bar{\thetv}}\mathrm{SE}(\bar{\thetv})&=\frac{\tau_{\mathrm{c}}-\tau}{\tau_{\mathrm{c}}\log2}\sum_{k=1}^{K}\frac{	I_{k}\nabla_{\bar{\thetv}}{S_{k}}-S_{k}	\nabla_{\bar{\thetv}}{I_{k}}}{(1+\gamma_{k})I_{k}^{2}}.
		\end{align}
		For the derivation of $ \nabla_{\bar{\thetv}}{S_{k}} $ with fixed $ \bPhi_{2,w_{k}} $, we 
		derive its 
		differential  
		as
		\begin{align}
			d(S_{k})&=d\bigl(\tr(\bar{\boldsymbol{\Psi}}_{k})^{2}\bigr)\nn\\
			&=2\tr(\bar{\boldsymbol{\Psi}}_{k})d\tr(\boldsymbol{\Psi}_{k})\nn\\
			&=2\tr(\bar{\boldsymbol{\Psi}}_{k})\tr(d(\bar{\boldsymbol{\Psi}}_{k})).\label{eq:dSk}
		\end{align}
		
		The next step concerns the derivation of $ d(\boldsymbol{\Psi}_{k}) $. First, we write $ \bar{\bPsi}_{k} $ as
		\begin{align}
			\bar{\bPsi}_{k}&=\bPsi_{k}+\bPsi_{1k}+\bPsi_{2}\\
			&=\bR_{0k}\bQ_{0k}\bR_{0k}+\bR_{1k}\bQ_{1k}\bR_{1k}+\bR_{2k}\bQ_{2k}\bR_{2k}.\label{psiK}
		\end{align}
		By using \cite[Eq. (3.35)]{hjorungnes:2011}, its  differential becomes
		\begin{align}
			d(\boldsymbol{\bar{\Psi}}_{k})&=d(\bR_{0k})\bQ_{0k}\bR_{0k}+\bR_{0k}d(\bQ_{0k})\bR_{0k}\nn\\
			&+\bR_{0k}\bQ_{0k}d(\bR_{0k})
			+d(\bR_{1k})\bQ_{1k}\bR_{1k}\nn\\
			&+\bR_{1k}d(\bQ_{1k})\bR_{1k}+\bR_{1k}\bQ_{1k}d(\bR_{1k})\label{eq:dPsik}
		\end{align}
		since $ \bR_{2k} $ and $  \bQ_{2k}$ do not depend on $ \bPhi_{1}  $.
		
		Regarding $ d(\bQ_{0k}) $, use of  \cite[eqn. (3.40)]{hjorungnes:2011} gives 
		\begin{align}
		&	d(\bQ_{0k})  =d\bigl(\mathbf{R}_{0k}+\frac{\sigma}{\tau P}\mathbf{I}_{M}\bigr)^{-1}\nn\\
			&=-\bigl(\mathbf{R}_{0k}+\frac{\sigma^{2}}{\tau P}\mathbf{I}_{M}\bigr)^{-1}d\bigl(\mathbf{R}_{0k}+\frac{\sigma^{2}}{\tau P}\mathbf{I}_{M}\bigr)\bigl(\mathbf{R}_{0k}+\frac{\sigma}{\tau P}\mathbf{I}_{M}\bigr)^{-1}\nonumber \\
			& =-\bQ_{0k}d(\mathbf{R}_{0k})\bQ_{0k}.\label{eq:dQ0k}
		\end{align}
		Similarly, we obtain 
		\begin{align}
			d(\bQ_{1k}) &=-\bQ_{1k}d(\mathbf{R}_{1k})\bQ_{1k}.\label{eq:dQ1k}	\end{align}
		
		Inserting \eqref{eq:dQ0k} and \eqref{eq:dQ1k} into \eqref{eq:dPsik}, it yields
		\begin{align}
		&	d(\boldsymbol{\Psi}_{k})=d(\bR_{0k})\bQ_{0k}\bR_{0k}-\bR_{0k}\bQ_{1k}d(\mathbf{R}_{1k})\bQ_{1k}\bR_{0k})\nn\\
			&+\bR_{0k}\bQ_{0k}d(\bR_{0k}+d(\bR_{1k})\bQ_{1k}\bR_{1k}\nn\\
			&-\bR_{1k}\bQ_{1k}d(\mathbf{R}_{1k})\bQ_{1k}\bR_{1k}+\bR_{1k}\bQ_{1k}d(\bR_{1k}).\label{eq:dPsik1}
		\end{align}

		Now, we have to obtain $ d({\mathbf{R}}_{0k}) $ and  $ d({\mathbf{R}}_{1k}) $ given that $ \bR_{0k}=\hat{\beta}_{k}\tr(\bR_{1}\bPhi_{2,w_{k}}\bR_{2}\bPhi_{2,w_{k}}^{\H})\tr(\bB_{1}\bPhi_{1}^{\H})\bR_{t} $ and $ \bR_{1k}=\hat{\beta}_{1k}\tr(\bB_{2}\bPhi_{1}^{\H})\bR_{1} $, where $ \bB_{1}=\bR_{1}\bPhi_{1}\bR_{2} $ and $ \bB_{2}=\bR_{t} \bPhi_{1}\bR_{1} $. First, we write $ d({\mathbf{R}}_{0k}) $ as
		\begin{align}
		&	d({\mathbf{R}}_{0k})=\bar{C}	\tr(\bB_{1}^{\H}\bPhi_{1}+\bB_{1}d(\bPhi_{1}^{\H}))\bR_{t}\\
			&=\bar{C}	\bR_{t}\bigl(\bigl(\diag\bigl(\bB_{1}\herm\bigr)\trans d(\boldsymbol{\bar{\theta}})+\bigl(\diag\bigl(\bB_{1}\bigr)\bigr)\trans d(\boldsymbol{\bar{\theta}}^{\ast})\bigr),\label{r0k}
		\end{align}
		where $\bar{C}= \hat{\beta}_{k}\tr(\bR_{1}\bPhi_{2,w_{k}}\bR_{2}\bPhi_{2,w_{k}}^{\H}) $, while the last equation is obtained by exploiting that  $ \bPhi_{1} $ is diagonal. 
		
		Similarly, we have
		\begin{align}
		&	d({\mathbf{R}}_{1k})=\hat{\beta}_{1k}\tr(\bB_{2}^{\H}d(\bPhi_{1})+\bB_{2}d(\bPhi_{1}))\bR_{1}\\
			&=\hat{\beta}_{1k}\bR_{1}\bigl(\bigl(\diag\bigl(\bB_{2}\herm\bigr)\trans d(\boldsymbol{\bar{\theta}})+\bigl(\diag\bigl(\bB_{2}\bigr)\bigr)\trans d(\boldsymbol{\bar{\theta}}^{\ast})\bigr).\label{r1k}
		\end{align}

		Substitution of  \eqref{eq:dPsik1} into \eqref{eq:dSk} results in
		\begin{align}
			d(S_{k})&=2\tr(\bar{\boldsymbol{\Psi}}_{k})\tr\bigl(\bQ_{0k} \bR_{0k} d(\bR_{0k})-\bQ_{1k}\bR_{0k}^{2}\bQ_{1k}d(\mathbf{R}_{0k})\nn\\
			&+			\bR_{0k}\bQ_{0k}d(\bR_{0k})+\bQ_{1k}\bR_{1k}d(\bR_{1k})\nn\\
			&	-\bQ_{1k}\bR_{1k}^{2}\bQ_{1k}d(\mathbf{R}_{1k})	\bR_{1k}\bQ_{1k}d(\bR_{1k})\bigr)\\
			&=\nu_{0k}\bigl(\bigl(\diag\bigl(\bB_{1}\herm\bigr)\trans d(\boldsymbol{\bar{\theta}})+\bigl(\diag\bigl(\bB_{1}\bigr)\bigr)\trans d(\boldsymbol{\bar{\theta}}^{\ast})\bigr)\nn\\
			&+	\nu_{1k}\bigl(\bigl(\diag\bigl(\bB_{2}\herm\bigr)\trans d(\boldsymbol{\bar{\theta}})+\bigl(\diag\bigl(\bB_{2}\bigr)\bigr)\trans d(\boldsymbol{\bar{\theta}}^{\ast})\bigr)\label{sk0}
		\end{align}
		where
		\begin{align}
			\nu_{0k}&=2\tr(\bar{\boldsymbol{\Psi}}_{k})\tr\bigl(\bar{C}\bQ_{0k} \bR_{0k} 	\bR_{t}-\bar{C}\bQ_{1k}\bR_{0k}^{2}\bQ_{1k}	\bR_{t}\nn\\
			&+\bar{C}\bR_{0k}\bQ_{0k}	\bR_{t}\bigr),\\
			\nu_{1k}&=2\tr(\bar{\boldsymbol{\Psi}}_{k})\tr\bigl(\hat{\beta}_{1k}\bQ_{1k}\bR_{1k}\bR_{1}-\hat{\beta}_{1k}\bQ_{1k}\bR_{1k}^{2}\bQ_{1k}\bR_{1}\nn\\
			&+\hat{\beta}_{1k}\bR_{1k}\bQ_{1k}\bR_{1}\bigr).
		\end{align} 
		Note that in \eqref{sk0}, we have inserted \eqref{r0k} and \eqref{r1k}.
		
		From \eqref{sk0}, we obtain that 
		\begin{align}
			\nabla_{\bar{\thetv}}S_k&=\frac{\partial}{\partial{\bar{\thetv}^{\ast}}}S_{k}\nn\\
			&=\nu_{0k}\diag\bigl(\bB_{1}\bigr)+\nu_{1k}\diag\bigl(\bB_{2}\bigr),
		\end{align}
		which proves \eqref{derivbartheta}.
		
		In the case of  $\nabla_{\bar{\thetv}}I_k$, we focus on the differential of $ I_k $
		\begin{align}
			\!\!	d(I_{k})  &\!=\!\sum\nolimits _{i=1}^{K}\!\!\tr(d(\bar{\mathbf{R}}_{k})\bar{\boldsymbol{\Psi}}_{i})\!+\!\sum\nolimits _{i=1}^{K}\!\!\tr(\mathbf{R}_{k}d(\bar{\boldsymbol{\Psi}}_{i})\!)\!\nn\\
			&-\!2\tr\bigl(\bar{\boldsymbol{\Psi}}_{k}d(\bar{\boldsymbol{\Psi}}_{k})\!\bigr)\!+\!\frac{K\sigma^{2}}{\rho}\!\sum\nolimits _{i=1}^{K}\!\!\tr(d(\bar{\boldsymbol{\Psi}}_{i})\!)\\
			&\!=\tr(\tilde{\boldsymbol{\Psi}}d(\bar{\mathbf{R}}_{k}))-2\tr\bigl(\bar{\boldsymbol{\Psi}}_{k}d(\bar{\boldsymbol{\Psi}}_{k})\bigr)+\sum\nolimits _{i=1}^{K}\tr(\tilde{\mathbf{R}}_{k}d(\bar{\boldsymbol{\Psi}}_{i})),\label{ik1}
		\end{align}
		where  $\tilde{\boldsymbol{\Psi}}=\sum\nolimits _{i=1}^{K}\bar{\boldsymbol{\Psi}}_{i}$
		and $\tilde{\mathbf{R}}_{k}=\bar{\mathbf{R}}_{k}+\frac{K\sigma^{2}}{\rho}\mathbf{I}_{M}$. Use of \eqref{eq:dPsik1} into \eqref{ik1} yields
		\begin{align}
			&d(I_{k}) =\tr(\tilde{\boldsymbol{\Psi}}d(\bar{\mathbf{R}}_{k}))+\sum\nolimits _{i=1}^{K}\tr(\tilde{\mathbf{R}}_{k}\bigl(d(\bar{\mathbf{R}}_{i})\mathbf{Q}_{i}\bar{\mathbf{R}}_{i}\bigr))\nn\\
			&-\bar{\mathbf{R}}_{i}\mathbf{Q}_{i}d(\bar{\mathbf{R}}_{i})\mathbf{Q}_{i}\bar{\mathbf{R}}_{i}+\bar{\mathbf{R}}_{i}\mathbf{Q}_{i}d(\bar{\mathbf{R}}_{i})-2\tr\bigl(\bar{\boldsymbol{\Psi}}_{k}\bigl(d(\bar{\mathbf{R}}_{k})\mathbf{Q}_{k}\bar{\mathbf{R}}_{k}\nn\\
			&-\bar{\mathbf{R}}_{k}\mathbf{Q}_{k}d(\bar{\mathbf{R}}_{k})\mathbf{Q}_{k}\bar{\mathbf{R}}_{k}+\bar{\mathbf{R}}_{k}\mathbf{Q}_{k}d(\bar{\mathbf{R}}_{k})\bigr)\bigr)\nn\\
			& =\tr(\check{\boldsymbol{\Psi}}_{k}d(\bar{\mathbf{R}}_{k}))+\sum\nolimits _{i=1}^{K}\tr\bigl(\tilde{{\mathbf{R}}}_{ki}d(\bar{\mathbf{R}}_{i})\bigr)\\
			& =\tr(\check{\boldsymbol{\Psi}}_{k}(d(\bR_{0k})+d(\bR_{1k})))\nn\\
			&+\sum\nolimits _{i=1}^{K}\tr\bigl(\tilde{{\mathbf{R}}}_{ki}(d(\bR_{0i})+d(\bR_{1i}))\bigr),\label{dik1}
		\end{align}
		where
		\begin{align}
			\check{\bar{\boldsymbol{\Psi}}}_{k}&=\bar{\boldsymbol{\Psi}}-2\bigl(\mathbf{Q}_{k}\bar{\mathbf{R}}_{k}\bar{\boldsymbol{\Psi}}_{k}+\bar{\boldsymbol{\Psi}}_{k}\bar{\mathbf{R}}_{k}\mathbf{Q}_{k}-\mathbf{Q}_{k}\bar{\mathbf{R}}_{k}\bar{\boldsymbol{\Psi}}_{k}\bar{\mathbf{R}}_{k}\mathbf{Q}_{k}\bigr)\\
			\tilde{\mathbf{R}}_{ki}&=\mathbf{Q}_{i}\bar{\mathbf{R}}_{i}\tilde{\mathbf{R}}_{k}-\mathbf{Q}_{i}\bar{\mathbf{R}}_{i}\tilde{\mathbf{R}}_{k}\bar{\mathbf{R}}_{i}\mathbf{Q}_{i}+\bar{\tilde{\mathbf{R}}}_{k}\bar{\mathbf{R}}_{i}\mathbf{Q}_{i}.
		\end{align}
		Equation \eqref{dik1} is obtained because the  dependence of  $ \bar{\bR}_{k} $ from   $ \bPhi_{1}  $ is hidden only on $ \bR_{0k} $ and $ \bR_{1k} $. Substitution of  \eqref{r0k} and \eqref{r1k} into \eqref{dik} allows to prove $\nabla_{\thetv^{t}}I_k$ as
		\begin{align}
			\nabla_{\bar{\thetv}}I_k&=\frac{\partial}{\partial\boldsymbol{\bar{\theta}}^{\ast}}I_{k}  \nn\\
			&=\diag\bigl(\bar{\nu}_{1k}\bB_{1}+\bar{\nu}_{2k}\bB_{2}+\sum\nolimits_{i=1}^{K}(\tilde{\nu}_{ki1}\bB_{1}+\tilde{\nu}_{ki2}\bB_{2})\bigr),
		\end{align}
		where $\bar{\nu}_{1k}=\bar{C}	\tr\bigl(\check{\bar{\boldsymbol{\Psi}}}_{k}\bR_{t}\bigr)$,$\bar{\nu}_{2k}=\hat{\beta}_{1k}\tr\bigl(\check{\bar{\boldsymbol{\Psi}}}_{k}\bR_{1}\bigr)$, $\tilde{\nu}_{ki1}=\bar{C}\tr\bigl(\tilde{\mathbf{R}}_{ki}\bR_{t}\bigr)$, and $\tilde{\nu}_{ki2}=\hat{\beta}_{1k}\tr\bigl(\tilde{\mathbf{R}}_{ki}\bR_{1}\bigr)$.
		
		\section{Proof of Proposition~\ref{PropoGradientss}}\label{prop3}
		In the case of the STAR-RIS, we have to derive  $\nabla_{\boldsymbol{\theta^{t}}}\mathrm{SE}(\thetv,\betv) $ in terms of $ \boldsymbol{\theta}^{t\ast}$, which is written as
		\begin{equation}
			\nabla_{\thetv^{t}}\mathrm{SE}(\thetv,\betv)=c\sum_{k=1}^{K}\frac{I_{k}\nabla_{\boldsymbol{\theta}^{t}}S_{k}-S_{k}\nabla_{\boldsymbol{\theta}^{t}}I_{k}}{(1+\gamma_{k})I_{k}^{2}},
		\end{equation}
		where $c=\frac{\tau_{c}-\tau}{\tau_{c}\log_{2}(e)}$.
		
		Regarding the computation of $\nabla_{\thetv^{t}}S_k$, we observe that $\nabla_{\thetv^{t}}S_k = 0$ if $w_k=r$, i.e., when UE $k$ is in the reflection region.  Hence, we focus on the derivation of $\nabla_{\thetv^{t}}S_k$ when $w_k=t$, which requires the derivation of $ d(S_{k}) $. Similar to \eqref{eq:dSk}, we have
		\begin{align}
			d(S_{k})&=2\tr(\bar{\boldsymbol{\Psi}}_{k})\tr(d(\bar{\boldsymbol{\Psi}}_{k}))\label{eq:dSk1}
		\end{align}
		with $ 	\bar{\boldsymbol{\Psi}}_{k} $  written as in \eqref{psiK}. Note that, in $ \bar{\boldsymbol{\Psi}}_{k} $, only $ \bR_{0k} $ and $ \bR_{2k} $ depend on $ \bPhi_{2,w_{k}} $. Thus, its differential is obtained as
		\begin{align}
			d(\bar{\boldsymbol{\Psi}}_{k})&=d(\bR_{0k})\bQ_{0k}\bR_{0k}-\bR_{0k}\bQ_{1k}d(\mathbf{R}_{1k})\bQ_{1k}\bR_{0k}\nn\\
			&+\bR_{0k}\bQ_{0k}d(\bR_{0k})+d(\bR_{2k})\bQ_{2k}\bR_{2k}\nn\\
			&-\bR_{2k}\bQ_{2k}d(\mathbf{R}_{2k})\bQ_{2k}\bR_{2k}+\bR_{2k}\bQ_{2k}d(\bR_{2k}),\label{eq:dPsik2}
		\end{align} 
		where $ \bR_{0k}=\bar{\bD}\tr(\bB_{3t}\bPhi_{2,t}^{\H}) $ and  $ \bR_{2k}=\tr(\bB_{4t}\bPhi_{2,t}^{\H})\bR_{2} $  with $ \bB_{3t}=\bR_{1}\bPhi_{2,t}\bR_{2} $, $ \bar{\bD} =\tr(\bB_{1}\bPhi_{1}^{\H})\bR_{t} $, and $ \bB_{4t}=\hat{\beta}_{2k}\tr(\bR_{t} \bPhi_{2,t}\bR_{2}) $.
		
		The differentials of $ \bR_{0k} $ and $ \bR_{2k} $ are derived as
		\begin{align}
			d(\mathbf{R}_{0k})&=\bar{\bD}\bigl(\bigl(\diag\bigl(\bB_{3t}\herm\diag(\boldsymbol{{\beta}}^{t})\bigr)\bigr)\trans d(\boldsymbol{\theta}^{t})\nn\\
			&+\bigl(\diag\bigl(\bB_{3t}\diag(\boldsymbol{{\beta}}^{t})\bigr)\bigr)\trans d(\boldsymbol{\theta}^{t\ast})\bigr),\label{eq:dRk2}\\
			d(\mathbf{R}_{2k})&=\bR_{2}\bigl(\bigl(\diag\bigl(\bB_{4t}\herm\diag(\boldsymbol{{\beta}}^{t})\bigr)\bigr)\trans d(\boldsymbol{\theta}^{t})\nn\\
			&+\bigl(\diag\bigl(\bB_{4t}\diag(\boldsymbol{{\beta}}^{t})\bigr)\bigr)\trans d(\boldsymbol{\theta}^{t\ast})\bigr).\label{eq:dRk3}
		\end{align}
		
		Based on \eqref{eq:dRk2} and \eqref{eq:dRk3},  \eqref{eq:dSk1} becomes
		\begin{align}
			d(S_{k})&=2\tr(\bar{\boldsymbol{\Psi}}_{k})\tr\bigl(\bQ_{0k}\bR_{0k}d(\bR_{0k})-\bQ_{1k}\bR_{0k}^{2}\bQ_{1k}d(\mathbf{R}_{1k})\nn\\
			&+\bR_{0k}\bQ_{0k}d(\bR_{0k})+\bQ_{2k}\bR_{2k}d(\bR_{2k})\nn\\
			&-\bQ_{2k}\bR_{2k}^{2}\bQ_{2k}d(\mathbf{R}_{2k})+\bR_{2k}\bQ_{2k}d(\bR_{2k})\bigr)\\
			&=\nu_{2k}\bigl(\diag\bigl(\bigl(\bB_{3t}\herm\diag(\boldsymbol{\mathbf{\beta}}^{t})\bigr)\bigr)\trans d\boldsymbol{\theta}^t\nn\\
			&+\bigl(\diag\bigl(\bB_{3t}\diag(\boldsymbol{\mathbf{\beta}}^{t})\bigr)\bigr)\trans d\boldsymbol{\theta}^{t\ast}\bigr)\nn\\
			&+\nu_{3k}\bigl(\diag\bigl(\bigl(\bB_{4t}\herm\diag(\boldsymbol{\mathbf{\beta}}^{t})\bigr)\bigr)\trans d\boldsymbol{\theta}^t\nn\\
			&+\bigl(\diag\bigl(\bB_{4t}\diag(\boldsymbol{\mathbf{\beta}}^{t})\bigr)\bigr)\trans d\boldsymbol{\theta}^{t\ast}\bigr),\label{sk3}
		\end{align}
		where
		\begin{align}
			\nu_{2k}&\!=\!2\tr(\bar{\boldsymbol{\Psi}}_{k})\!\tr\bigl(\bQ_{0k} \bR_{0k} 	\bar{\bD}-\bQ_{1k}\bR_{0k}^{2}\bQ_{1k}	\bar{\bD}+\bR_{0k}\bQ_{0k}	\bar{\bD}\bigr),\\
			\nu_{3k}&\!=\!2\tr(\bar{\boldsymbol{\Psi}}_{k})\!\tr\bigl(\bQ_{2k}\bR_{2k}\bR_{2}\!-\!\bQ_{2k}\bR_{2k}^{2}\bQ_{2k}\bR_{2}\!+\!\bR_{2k}\bQ_{2k}\bR_{2}\bigr).
		\end{align} 
		
		Thus, in the case of $w_k=t$, we obtain
		\begin{align}
			\nabla_{\thetv^{t}}S_k&=\nu_{2k}\diag\bigl(\bB_{3t}\diag(\boldsymbol{{\beta}}^{t})\bigr)+\nu_{3k}\diag\bigl(\bB_{4t}\diag(\boldsymbol{{\beta}}^{t})\bigr),
		\end{align}
		which equals to \eqref{derivtheta_t}. The proof of \eqref{derivtheta_r} follows similar lines.
		
		To derive  $\nabla_{\bar{\thetv}}I_k$, we aim at finding the  differential of $ I_k $. In a similar way to  \eqref{ik1}, we obtain
		\begin{align}
			d(I_{k}) 	& =\tr(\check{\boldsymbol{\Psi}}_{k}(d(\bR_{0k})+d(\bR_{2k})))\nn\\
			&+\sum\nolimits _{i\in \mathcal{K}_t}\tr\bigl(\tilde{{\mathbf{R}}}_{ki}(d(\bR_{0i})+d(\bR_{2i}))\bigr),\label{dik}
		\end{align}
		
		Note that $d(\bR_{0i})=d(\bR_{2i})=0$ if $w_{i}\neq t$ because $ d(\bR_{0i}) $ and  $ d(\bR_{2i}) $  do not depend on $\thetv^{t}$ in this case. Hence, we have
		\begin{align}
			\nabla_{\thetv^{t}}I_k&=\frac{\partial}{\partial\boldsymbol{\thetv}^{t\ast}}I_{k} \nn\\
			&=\diag\bigl(\tilde{\mathbf{A}}_{kt}\diag(\boldsymbol{\mathbf{\beta}}^{t})\bigr),
		\end{align}
		where 
		\begin{equation}
			\tilde{\mathbf{A}}_{kt}=\begin{cases}
				\bar{\nu}_{2k}\bB_{3}+\bar{\nu}_{2k}\bB_{4}+\sum\nolimits _{i\in\mathcal{K}_{t}}^{K}(\tilde{\nu}_{ki2}\bB_{3}+\tilde{\nu}_{ki2}\bB_{4}) & w_{k}=t\\
				\sum\nolimits _{i\in\mathcal{K}_{t}}(\tilde{\nu}_{ki2}\bB_{3}+\tilde{\nu}_{ki2}\bB_{4}) & w_{k}\neq t
			\end{cases}
		\end{equation}
		with
		$\bar{\nu}_{1k}=	\tr\bigl(\check{\bar{\boldsymbol{\Psi}}}_{k}\bar{\bD}\bigr)$,$\bar{\nu}_{2k}=\hat{\beta}_{1k}\tr\bigl(\check{\bar{\boldsymbol{\Psi}}}_{k}\bR_{2}\bigr)$, $\tilde{\nu}_{ki1}=\tr\bigl(\tilde{\mathbf{R}}_{ki}\bar{\bD}\bigr)$, and $\tilde{\nu}_{ki2}=\hat{\beta}_{1k}\tr\bigl(\tilde{\mathbf{R}}_{ki}\bR_{2}\bigr)$. Thus, \eqref{derivtheta_t_Ik} is proved, while \eqref{derivtheta_r_Ik} is obtained by following the same steps.
		
		In the case of  $\nabla_{\boldsymbol{\beta}^{t}}S_{k}$, we consider first the scenario $w_{k}=t$, and we have
		\begin{align}
			d(\mathbf{R}_{0k})&=2\real\{ \bar{\bD}\bigl(\diag\bigl(\bB_{3t}\herm\diag(\boldsymbol{{\beta}}^{t})\bigr)\bigr)\}\trans d(\boldsymbol{\theta}^{t}),\label{eq:dRk4}\\
			d(\mathbf{R}_{2k})&=2\real\{\bR_{2}\bigl(\diag\bigl(\bB_{4t}\herm\diag(\boldsymbol{{\beta}}^{t})\bigr)\bigr)\}\trans d(\boldsymbol{\theta}^{t}).\label{eq:dRk5}
		\end{align}
		
		By substituting \eqref{eq:dRk4} and \eqref{eq:dRk5} into \eqref{eq:dSk1}, we obtain
		\begin{align}
			\nabla_{\thetv^{t}}S_k&=2\real\{ \bar{\bD}\bigl(\diag\bigl(\bB_{3t}\herm\diag(\boldsymbol{{\beta}}^{t})\bigr)\bigr)\}\nn\\
			&+2\real\{\bR_{2}\bigl(\diag\bigl(\bB_{4t}\herm\diag(\boldsymbol{{\beta}}^{t})\bigr)\bigr)\}.
		\end{align}
		
		Following the same lines as above, we result in
		\begin{align}
			\nabla_{\thetv^{r}}S_k&=2\real\{ \bar{\bD}\bigl(\diag\bigl(\bB_{3r}\herm\diag(\boldsymbol{{\beta}}^{r})\bigr)\bigr)\}\nn\\
			&+2\real\{\bR_{2}\bigl(\diag\bigl(\bB_{4r}\herm\diag(\boldsymbol{{\beta}}^{r})\bigr)\bigr)\},\\
			\nabla_{\thetv^{t}}I_k
			&=2\real\{\diag\bigl(\tilde{\mathbf{A}}_{kt}\diag(\boldsymbol{\mathbf{\beta}}^{t})\bigr)\},\\
			\nabla_{\thetv^{r}}I_k
			&=2\real\{\diag\bigl(\tilde{\mathbf{A}}_{kr}\diag(\boldsymbol{\mathbf{\beta}}^{r})\bigr)\},
		\end{align}
		which concludes the proof.
	\end{appendices}
	\bibliographystyle{IEEEtran}
	
	\bibliography{IEEEabrv,mybib}

% Generated by IEEEtran.bst, version: 1.12 (2007/01/11)
\begin{thebibliography}{10}
\providecommand{\url}[1]{#1}
\csname url@samestyle\endcsname
\providecommand{\newblock}{\relax}
\providecommand{\bibinfo}[2]{#2}
\providecommand{\BIBentrySTDinterwordspacing}{\spaceskip=0pt\relax}
\providecommand{\BIBentryALTinterwordstretchfactor}{4}
\providecommand{\BIBentryALTinterwordspacing}{\spaceskip=\fontdimen2\font plus
\BIBentryALTinterwordstretchfactor\fontdimen3\font minus
  \fontdimen4\font\relax}
\providecommand{\BIBforeignlanguage}[2]{{%
\expandafter\ifx\csname l@#1\endcsname\relax
\typeout{** WARNING: IEEEtran.bst: No hyphenation pattern has been}%
\typeout{** loaded for the language `#1'. Using the pattern for}%
\typeout{** the default language instead.}%
\else
\language=\csname l@#1\endcsname
\fi
#2}}
\providecommand{\BIBdecl}{\relax}
\BIBdecl

\bibitem{Wu2020}
Q.~{Wu} and R.~{Zhang}, ``Towards smart and reconfigurable environment:
  Intelligent reflecting surface aided wireless network,'' \emph{IEEE Commun.
  Mag.}, vol.~58, no.~1, pp. 106--112.

\bibitem{DiRenzo2020}
M.~Di~Renzo \emph{et~al.}, ``Smart radio environments empowered by
  reconfigurable intelligent surfaces: {How} it works, state of research, and
  the road ahead,'' \emph{IEEE J. Sel. Areas Commun.}, vol.~38, no.~11, pp.
  2450--2525, 2020.

\bibitem{Wu2019}
Q.~Wu and R.~Zhang, ``Intelligent reflecting surface enhanced wireless network
  via joint active and passive beamforming,'' \emph{IEEE Trans. Wireless
  Commun.}, vol.~18, no.~11, pp. 5394--5409, 2019.

\bibitem{Papazafeiropoulos2022b}
A.~Papazafeiropoulos \emph{et~al.}, ``Joint spatial division and multiplexing
  for {FDD in Intelligent Reflecting Surface-Assisted Massive MIMO Systems},''
  \emph{IEEE Trans. Veh. Tech.}, vol.~71, no.~10, pp. 10\,754--10\,769, 2022.

\bibitem{Papazafeiropoulos2022a}
A.~Papazafeiropoulos, I.~Krikidis, and P.~Kourtessis, ``Impact of channel aging
  on reconfigurable intelligent surface aided massive {MIMO Systems with
  Statistical CSI},'' \emph{IEEE Trans. Veh. Tech.}, pp. 1--15, 2022.

\bibitem{Zheng2019}
B.~Zheng and R.~Zhang, ``Intelligent reflecting surface-enhanced {OFDM:
  C}hannel estimation and reflection optimization,'' \emph{IEEE Wireless
  Commun. Lett.}, vol.~9, no.~4, pp. 518--522, 2019.

\bibitem{Yang2020b}
Y.~Yang \emph{et~al.}, ``Intelligent reflecting surface meets {OFDM: Protocol}
  design and rate maximization,'' \emph{IEEE Trans. Commun.}, vol.~68, no.~7,
  pp. 4522--4535, 2020.

\bibitem{Zheng2021a}
B.~Zheng and R.~Zhang, ``{IRS} meets relaying: {Joint} resource allocation and
  passive beamforming optimization,'' \emph{IEEE Wireless Commun. Lett.},
  vol.~10, no.~9, pp. 2080--2084, 2021.

\bibitem{Yildirim2021}
I.~Yildirim \emph{et~al.}, ``Hybrid {RIS}-empowered reflection and
  decode-and-forward relaying for coverage extension,'' \emph{IEEE Commun.
  Lett.}, vol.~25, no.~5, pp. 1692--1696, 2021.

\bibitem{Hou2020}
T.~Hou \emph{et~al.}, ``{MIMO-NOMA} networks relying on reconfigurable
  intelligent surface: {A} signal cancellation-based design,'' \emph{IEEE
  Trans. Commun.}, vol.~68, no.~11, pp. 6932--6944, 2020.

\bibitem{Zheng2021b}
B.~Zheng, C.~You, and R.~Zhang, ``Double-{IRS assisted multi-user MIMO}:
  {Cooperative} passive beamforming design,'' \emph{IEEE Transactions on
  Wireless Communications}, vol.~20, no.~7, pp. 4513--4526, 2021.

\bibitem{Zheng2021}
------, ``Efficient channel estimation for double-{IRS} aided multi-user {MIMO}
  system,'' \emph{IEEE Trans. Commun.}, vol.~69, no.~6, pp. 3818--3832, 2021.

\bibitem{Mei2022}
W.~Mei \emph{et~al.}, ``Intelligent reflecting surface-aided wireless networks:
  {From} single-reflection to multireflection design and optimization,''
  \emph{Proc. IEEE}, vol. 110, no.~9, pp. 1380--1400, 2022.

\bibitem{Dong2021}
L.~Dong \emph{et~al.}, ``Double intelligent reflecting surface for secure
  transmission with inter-surface signal reflection,'' \emph{IEEE Trans. Veh.
  Tech.}, vol.~70, no.~3, pp. 2912--2916, 2021.

\bibitem{Abdullah2022}
Z.~Abdullah \emph{et~al.}, ``Impact of phase-noise and spatial correlation on
  double-{RIS}-assisted multiuser {MISO} networks,'' \emph{IEEE Wireless
  Commun. Lett.}, vol.~11, no.~7, pp. 1473--1477, 2022.

\bibitem{Papazafeiropoulos2022e}
A.~Papazafeiropoulos \emph{et~al.}, ``Coverage probability of double-{IRS}
  assisted communication systems,'' \emph{IEEE Wireless Commun. Lett.},
  vol.~11, no.~1, pp. 96--100.

\bibitem{Ding2022}
G.~Ding \emph{et~al.}, ``Analysis and optimization of a double-{IRS}
  cooperatively assisted system with a quasi-static phase shift design,''
  \emph{IEEE Trans. Wireless Commun.}, pp. 1--1, 2022.

\bibitem{Han2022}
Y.~Han \emph{et~al.}, ``Double-{IRS aided MIMO communication under LoS
  channels: Capacity} maximization and scaling,'' \emph{IEEE Trans. Commun.},
  vol.~70, no.~4, pp. 2820--2837, 2022.

\bibitem{Han2020}
------, ``Cooperative double-{IRS aided communication: Beamforming} design and
  power scaling,'' \emph{IEEE Wireless Commun. Lett.}, vol.~9, no.~8, pp.
  1206--1210, 2020.

\bibitem{Xu2021}
J.~Xu \emph{et~al.}, ``{STAR-RISs}: {Simultaneous} transmitting and reflecting
  reconfigurable intelligent surfaces,'' \emph{IEEE Commun. Lett.}, vol.~25,
  no.~9, pp. 3134--3138, 2021.

\bibitem{Mu2021}
X.~Mu \emph{et~al.}, ``Simultaneously transmitting and reflecting {(STAR) RIS}
  aided wireless communications,'' \emph{IEEE Trans. Wireless Commun.},
  vol.~21, no.~5, pp. 3083--3098, 2022.

\bibitem{Papazafeiropoulos2022d}
A.~Papazafeiropoulos, P.~Kourtessis, and I.~Krikidis, ``{STAR-RIS Assisted
  Full-Duplex Systems: Impact} of correlation and maximization,'' \emph{IEEE
  Commun. Lett.}, vol.~26, no.~12, pp. 3004--3008, 2022.

\bibitem{Papazafeiropoulos2022c}
A.~Papazafeiropoulos \emph{et~al.}, ``Coverage probability of
  {STAR-RIS-Assisted Massive MIMO} systems with correlation and phase errors,''
  \emph{IEEE Wireless Commun. Lett.}, vol.~11, no.~8, pp. 1738--1742, 2022.

\bibitem{Niu2022}
H.~Niu \emph{et~al.}, ``Weighted sum rate optimization for {STAR-RIS}-assisted
  {MIMO} system,'' \emph{IEEE Trans. Veh. Tech.}, vol.~71, no.~2, pp.
  2122--2127, 2022.

\bibitem{Xu2022}
J.~Xu \emph{et~al.}, ``{STAR-RISs: A }correlated t \& r phase-shift model and
  practical phase-shift configuration strategies,'' \emph{IEEE J. Sel. Topics
  Signal Proc.}, vol.~16, no.~5, pp. 1097--1111, 2022.

\bibitem{Zhang2022}
H.~Zhang \emph{et~al.}, ``Intelligent omni-surfaces for full-dimensional
  wireless communications: {Principles}, technology, and implementation,''
  \emph{IEEE Commun. Mag.}, vol.~60, no.~2, pp. 39--45, 2022.

\bibitem{Zeng2022}
S.~Zeng \emph{et~al.}, ``Intelligent omni-surfaces: Reflection-refraction
  circuit model, full-dimensional beamforming, and system implementation,''
  \emph{IEEE Trans. Commun.}, vol.~70, no.~11, pp. 7711--7727, 2022.

\bibitem{Zhang2020b}
S.~Zhang \emph{et~al.}, ``Beyond intelligent reflecting surfaces:
  {Reflective-transmissive} metasurface aided communications for
  full-dimensional coverage extension,'' \emph{IEEE Trans. Veh. Techn.},
  vol.~69, no.~11, pp. 13\,905--13\,909.

\bibitem{Pan2020}
C.~{Pan} \emph{et~al.}, ``Multicell {MIMO} communications relying on
  intelligent reflecting surfaces,'' \emph{IEEE Trans. Wireless Commun.},
  vol.~19, no.~8, pp. 5218--5233, 2020.

\bibitem{Bjoernson2020}
E.~{Bj{\"o}rnson} and L.~{Sanguinetti}, ``Rayleigh fading modeling and channel
  hardening for reconfigurable intelligent surfaces,'' \emph{IEEE Wireless
  Commun. Lett.}, vol.~10, no.~4, pp. 830--834, 2021.

\bibitem{Badloe2017}
T.~Badloe, J.~Mun, and J.~Rho, ``Metasurfaces-based absorption and reflection
  control: {Perfect} absorbers and reflectors,'' \emph{J. of Nanomaterials},
  vol. 2017, 2017.

\bibitem{Hoydis2013}
J.~Hoydis, S.~ten Brink, and M.~Debbah, ``Massive {MIMO} in the {UL/DL} of
  cellular networks: How many antennas do we need?'' \emph{IEEE J. Select.
  Areas Commun.}, vol.~31, no.~2, pp. 160--171, February 2013.

\bibitem{Neumann2018}
D.~Neumann, M.~Joham, and W.~Utschick, ``Covariance matrix estimation in
  massive {MIMO},'' \emph{IEEE Signal Process. Lett.}, vol.~25, no.~6, pp.
  863--867, 2018.

\bibitem{Papazafeiropoulos2022}
A.~Papazafeiropoulos, ``Ergodic capacity of {IRS}-assisted {MIMO} systems with
  correlation and practical phase-shift modeling,'' \emph{IEEE Wireless Commun.
  Lett.}, vol.~11, no.~2, pp. 421--425, 2022.

\bibitem{Bjoernson2017}
E.~Bj{\"o}rnson \emph{et~al.}, ``Massive {MIMO} networks: Spectral, energy, and
  hardware efficiency,'' \emph{Foundations and Trends{\textregistered} in
  Signal Processing}, vol.~11, no. 3-4, pp. 154--655, 2017.

\bibitem{Jiang2022}
F.~Jiang \emph{et~al.}, ``Optimization of {RIS}-aided integrated localization
  and communication,'' \emph{arXiv preprint arXiv:2209.02828}, 2022.

\bibitem{He2019}
Z.-Q. He and X.~Yuan, ``Cascaded channel estimation for large intelligent
  metasurface assisted massive {MIMO},'' \emph{IEEE Wireless Commun. Lett.},
  vol.~9, no.~2, pp. 210--214, 2019.

\bibitem{Nadeem2020}
Q.~{Nadeem} \emph{et~al.}, ``Intelligent reflecting surface-assisted multi-user
  {MISO Communication: Channel} estimation and beamforming design,'' \emph{IEEE
  Open J. Commun. Soc.}, vol.~1, pp. 661--680, 2020.

\bibitem{Medard2000}
M.~Medard, ``The effect upon channel capacity in wireless communications of
  perfect and imperfect knowledge of the channel,'' \emph{IEEE Trans. Inf.
  Theory}, vol.~46, no.~3, pp. 933--946, May 2000.

\bibitem{Couillet2011}
R.~Couillet and M.~Debbah, \emph{Random matrix methods for wireless
  communications}.\hskip 1em plus 0.5em minus 0.4em\relax Cambridge University
  Press, 2011.

\bibitem{Bertsekas1999}
D.~Bertsekas, \emph{Nonlinear Programming}, 2nd~ed., M.~A. Scientific, Ed.,
  1999.

\bibitem{Papazafeiropoulos2015a}
A.~K. Papazafeiropoulos and T.~Ratnarajah, ``Deterministic equivalent
  performance analysis of time-varying massive {MIMO} systems,'' \emph{IEEE
  Trans. Wireless Commun.}, vol.~14, no.~10, pp. 5795--5809, 2015.

\bibitem{hjorungnes:2011}
A.~Hj{\o}rungnes, \emph{Complex-Valued Matrix Derivatives: With Applications in
  Signal Processing and Communications}.\hskip 1em plus 0.5em minus 0.4em\relax
  Cambridge University Press.

\end{thebibliography}
	
\end{document}